\newcommand{\R}{\mathbb{R}}
\DeclareMathOperator*{\argmin}{arg\,min}
\newcommand{\E}[1]{\mathbb{E}\!\left[\,#1\,\right]}
\newcommand{\EC}[2]{\mathbb{E}\!\left[\,#1 \,\middle|\, #2\,\right]}
\newcommand{\VarC}[2]{\operatorname{Var}\!\left[\,#1 \,\middle|\, #2\,\right]}
\newcommand{\CovC}[2]{\operatorname{Cov}\!\left[\,#1 \,\middle|\, #2\,\right]}
\newcommand{\Prob}[1]{\mathbb{P}\!\left\{#1\right\}}
\newcommand{\Exp}[1]{\exp\!\left(#1\right)}
\DeclareMathOperator{\diag}{diag}
\theoremstyle{plain}
\newtheorem{theorem}{Theorem}[section]
\newtheorem{lemma}{Lemma}[section]
\theoremstyle{definition}
\newtheorem{definition}{Definition}[section]
\theoremstyle{remark}
\begin{document}
\title{Stochastic Control Barrier Functions under State Estimation: From Euclidean Space to Lie Groups
}
\author{Ruoyu Lin and Magnus Egerstedt
\thanks{This work was supported by the U.S. Army Research Lab through ARL DCIST CRA W911NF-17-2-0181}
\thanks{Ruoyu Lin and Magnus Egerstedt are with the Department of Electrical Engineering and Computer Science, University of California, Irvine, CA 92697, USA. Email: {\tt\small \{\href{mailto:rlin10@uci.edu}{\tt\small rlin10}, \href{mailto:magnus@uci.edu}{\tt\small magnus\}@uci.edu}}}
}
\maketitle
\begin{abstract}
Ensuring safety for autonomous systems under uncertainty remains challenging, particularly when safety of the true state is required despite the true state not being fully known. Control barrier functions (CBFs) have become widely adopted as safety filters. However, standard CBF formulations do not explicitly account for state estimation uncertainty and its propagation, especially for stochastic systems evolving on manifolds. In this paper, we propose a safety-critical control framework with a provable bound on the finite-time safety probability for stochastic systems under noisy state information. The proposed framework explicitly incorporates the uncertainty arising from both process and measurement noise, and synthesizes controllers that adapt to the level of uncertainty. The framework admits closed-form solutions in linear settings, and experimental results demonstrate its effectiveness on systems whose state spaces range from Euclidean space to Lie groups.
\end{abstract}
\begin{IEEEkeywords}
Safety-critical control, stochastic systems, algebraic/geometric methods, estimation, robotics.
\end{IEEEkeywords}

\section{Introduction} \label{Section:Introduction}
\subsection{Background and Motivation} \label{Section:Intro_BM}
Safety is a fundamental requirement for autonomous systems deployed in direct interaction with humans and complex environments, from robotic manipulators performing minimally invasive surgery to autonomous vehicles navigating dense urban traffic, where safety violations can lead to catastrophic consequences \cite{pek2020fail,alemzadeh2016adverse}. Control barrier functions (CBFs) have emerged as a widely adopted safety filter to ensure forward invariance, i.e., keeping the state in a safe set, by minimally modifying nominal control policies \cite{ames2016control}. However, the standard CBF formulation assumes access to ground-truth states of the underlying dynamical systems.

In practice, autonomous systems inevitably operate under uncertainty arising from noisy or incomplete sensing, unmodeled dynamics, and stochastic disturbances. A common approach is to enforce safety under worst-case bounded deterministic disturbances, but this often results in overly conservative behavior. Notably, the level of uncertainty depends on operating conditions. For example, degraded sensing in fog increases observation uncertainty, and slippery or uneven terrain increases uncertainty in the system dynamics, both calling for more cautious actions. Conversely, when state estimates are reliable, worst-case conservatism becomes unnecessarily restrictive. This raises a question: How can safety of the true state still be ensured when the true state cannot be fully known, while the control policy adapts to the level of uncertainty?

Beyond uncertainty, an additional challenge stems from the geometry of the state manifold. In practice, the states of dynamical systems evolve on non-Euclidean spaces whenever rotational degrees of freedom are present, which has motivated the development of geometric control techniques for global property assessment, structural preservation, or vibrational stabilization (see, e.g., \cite{sussmann1973orbits,brockett1983asymptotic,bullo1999tracking,taha2020vibrational}). From an application perspective, when the size of a robot is comparable to the scale of its environment, its shape becomes critical, particularly in narrow environments (see, e.g., \cite{lavalle2006planning}). For example, consider an unmanned aerial vehicle (UAV) maneuvering at high speed through a cluttered environment. In such settings, considering only the position of the UAV or approximating the UAV as a point mass is insufficient to design a safety filter, since collision avoidance depends on the UAV's pose, which evolves on a Lie group. Furthermore, for stochastic systems, the importance of the geometry of the state manifold becomes more pronounced because it governs how uncertainty influences the state evolution (see, e.g., \cite{brockett1997notes,thrun2000real,long2013banana,marques2025lies}), as illustrated in Fig.~\ref{fig:Illustration}. This is different from the scenario where the state of a dynamical system is in Euclidean space and the noise is typically additive.

\begin{figure}[t]
\centering
\includegraphics[scale=0.2]{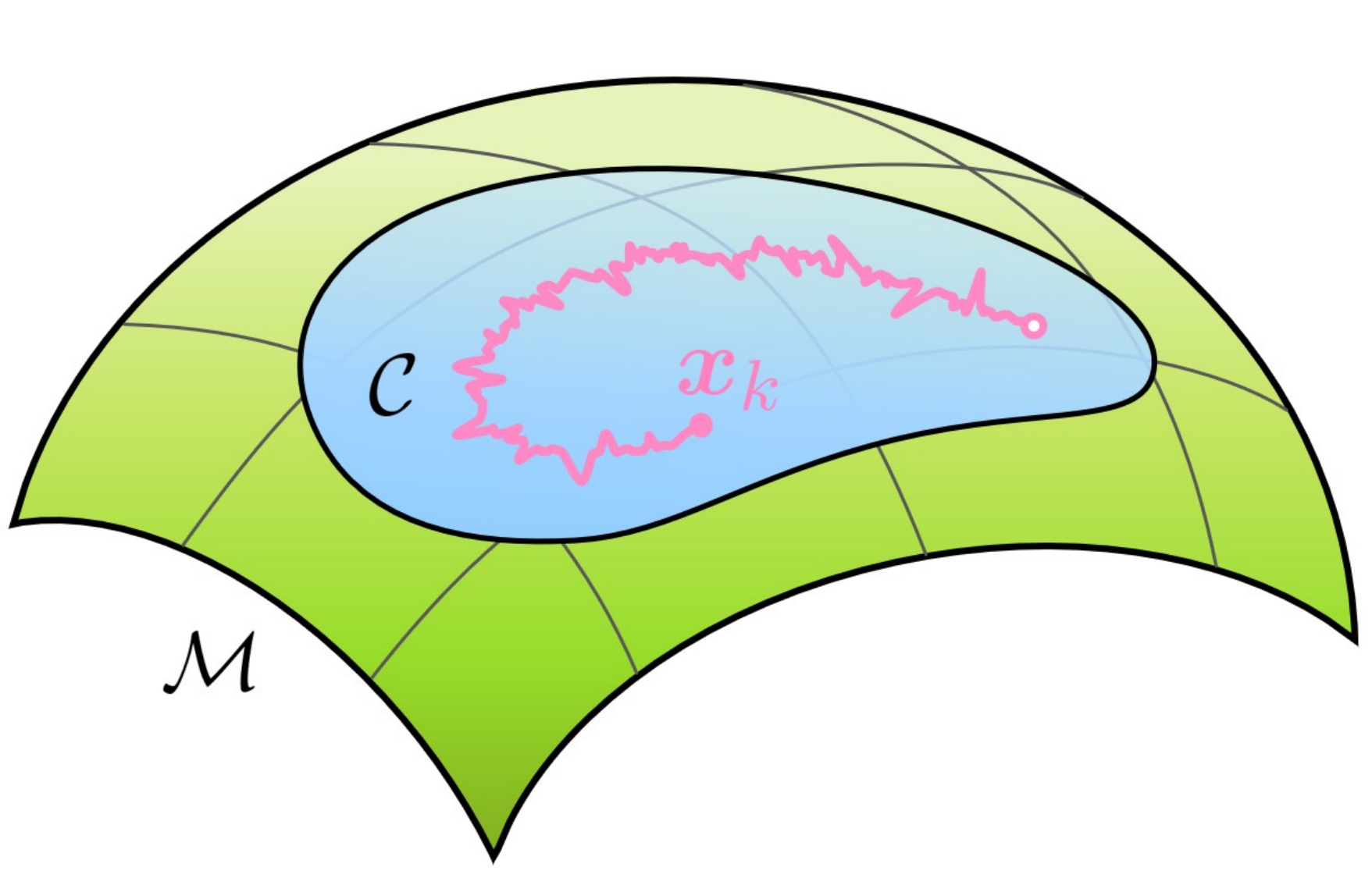}
\caption{Trajectory (pink) of a stochastic system evolving on a safe submanifold (blue) of the state manifold (green).}
\label{fig:Illustration}
\end{figure}
\subsection{Related Work} \label{Section:Intro_RW}
\subsubsection{Safety for Stochastic Systems}
Stochastic control barrier functions (SCBFs) have been introduced for safety-critical control of stochastic systems and it has been proved that almost-sure safety cannot be achieved by SCBFs under stochastic disturbances with unbounded support over infinite horizons \cite{so2023almost}. Based on \cite{kushner2003finite}, an SCBF-based framework for finite-time safety verification and control of stochastic systems is proposed in \cite{santoyo2019verification} and later extended in \cite{santoyo2021barrier}. However, this framework synthesizes the controller offline under polynomial assumptions on the CBFs, system dynamics, and controller, rather than through online optimization, which limits its applicability to general CBFs and adaptivity to the environment.

In contrast, \cite{cosner2023robust} proposes an online optimization-based control synthesis framework with finite-time safety guarantees for stochastic systems. However, it requires the CBF to be upper-bounded in order to construct probabilistic safety guarantees. Subsequently,  \cite{cosner2024bounding} relaxes this requirement by assuming stronger structural information on the system (e.g., a bounded difference between the true and predictable CBF values), yielding tighter theoretical bounds. However, the control synthesis framework in \cite{cosner2023robust,cosner2024bounding} does not account for the second-order moment information of CBFs so the resulting controllers do not adapt to the level of uncertainty. This may result in overly aggressive behavior when the level of uncertainty is relatively high, which will be demonstrated in Section~\ref{Section:motion planning}. In addition,
\cite{mestres2025probabilistic} proposes a probabilistic CBF framework, where the CBF-based constraint is reformulated as a chance constraint for stochastic systems, which is then converted into a computable deterministic surrogate constraint via probabilistic inequalities (e.g., Cantelli's inequality) for control synthesis. 

Another line of work is on safety verification. \cite{prajna2004stochastic} proposes a barrier-certificate framework for stochastic systems with provable bounds on the probability of reaching unsafe sets. However, the barrier functions, the safe sets, and the system dynamics are assumed to be polynomial for tractable computation. Recently, \cite{liu2024safety,liu2025safety} investigates finite-time safety verification of stochastic systems by reducing the problem to deterministic verification on an eroded safe set constructed via Minkowski difference. However, even when an analytic function (e.g., a CBF) defining the original safe set is given, constructing such eroded sets may be challenging in complex environments or high-dimensional state spaces, similar to the difficulty encountered in analytic free configuration-space construction in motion planning \cite{lavalle2006planning}.

It is worth mentioning that all the aforementioned works assume access to ground-truth states of dynamical systems without considering state estimation, and they do not consider systems whose states evolve on manifolds.

\subsubsection{CBFs on Manifolds}
Geometric control barrier functions (GCBFs) are introduced in \cite{wu2015safety} for dynamical systems whose states evolve on manifolds, and extended in \cite{wu2016safety} to handle time-varying safety constraints, enabling coordinate-free safety-critical control for fully actuated systems. Recently, a general framework of GCBFs on bundles is developed in \cite{de2025bundles}, which also introduces a backstepping-based synthesis method that constructs safety guarantees on the tangent bundle from configuration-space constraints for a class of underactuated systems. In addition, \cite{letti2025safety} proposes energy-augmented control barrier functions for safety-critical control on Lie groups, and validates their use for configuration-based obstacle avoidance and directional kinetic energy limiting. However, the aforementioned works do not consider stochasticity.
\subsection{Contributions} \label{Section:Intro_C}
In this paper, we aim to develop a general safety-critical control framework, which can deal with stochastic systems evolving either in Euclidean space or on manifolds while considering state estimation. The main contributions of this paper are as follows:
\begin{itemize}
\item
We propose a safety-critical control framework for stochastic systems where the ground-truth state is unavailable and decisions must be made from noisy state information. The proposed framework explicitly incorporates state estimation uncertainty arising jointly from dynamics noise and measurement noise. We also provide an offline computable theoretical bound on the finite-time safety probability, based on martingale theory, under mild assumptions. The CBF need not be upper bounded, convex, concave, or polynomial, and no bound is assumed on the difference between the true and predicted CBF values. Furthermore, the framework supports online, optimization-based control synthesis and yields behavior that adapts to the level of uncertainty.
\item 
For linear systems with affine CBFs, we derive closed-form expressions of the proposed framework, which admits a quadratic program (QP) for control synthesis. Then, we extend the expectation-based CBF \cite{cosner2023robust} and the probabilistic CBF \cite{mestres2025probabilistic} to the formulations under state estimation uncertainty, which also support CBF-QP to synthesize controllers. Moreover, the proposed framework offers a new perspective on motion planning, complementary to existing approaches such as sampling-based planning and diffusion-model-based planning, e.g., \cite{lavalle1998rapidly,kavraki2002probabilistic,janner2022planning}. We compare the proposed framework with the extended expectation-based CBF and probabilistic CBF under state estimation uncertainty for motion planning. The experimental results demonstrate that, within a given time, the proposed framework generates significantly more collision-free trajectories that successfully reach the goal than the other two methods. We further provide theoretical explanations of this advantage.
\item 
The proposed framework naturally accommodates general nonlinear stochastic systems whose states evolve on manifolds. In this work, we focus on the special Euclidean group $\mathrm{SE}(n)$, representing the poses of robots. We utilize Taylor expansion on Lie groups to transform the proposed safety condition into an online optimization-based safety-filter for control synthesis, where we also derive the curvature correction terms in the constraint to handle highly nonlinear CBFs. To demonstrate the effectiveness of the safety-filter, we apply it to: (i) a differential-drive wheeled robot (whose state evolves on $\mathrm{SE}(2)$) with an open-loop nominal controller, and (ii) a rigid body (evolving on $\mathrm{SE}(3)$) with a closed-loop nominal controller attempting to maintain its initial orientation while translating forward in the presence of a slit.
\end{itemize}
\subsection{Organization} \label{Section:Intro_O}
The remainder of this paper is organized as follows. In Section~\ref{Section:Preliminaries}, we review a few tools from probability theory and Lie group theory that will be used in the paper. In Section~\ref{Section:StochasticCBF}, we propose the state estimation-aware stochastic control barrier function. Then, we provide the corresponding theoretical bound on finite-time safety probability under mild assumptions. In Section~\ref{Section:LTI}, we derive the closed-form expressions of the proposed framework for linear systems with affine CBFs, and extend other two methods to the formulations under state estimation uncertainty, which can be formulated as CBF-QPs for control synthesis and motion planning (or trajectory generation), followed by comparisons among the three methods. In Section~\ref{Section:LieGroup}, we transform the proposed stochastic safety condition into an optimization-based control synthesis formulation on the special Euclidean group, and apply it to $\mathrm{SE}(2)$ and $\mathrm{SE}(3)$ as safety-filters, respectively. Finally, Section~\ref{Section:Conclusion} concludes the paper.

\section{Preliminaries} \label{Section:Preliminaries}
In this section, we briefly review some definitions and results from probability theory (see, e.g., \cite{vershynin2018high,grimmett2020probability}) and Lie group theory (see, e.g., \cite{chirikjian2011stochastic,bullo2019geometric}) that will be used throughout the paper to the extent necessary for the development of the main results. 

\subsection{Probability}
Let $\{\cdot\}$ denote the set of events such that the statement inside the braces holds, and let the abbreviation \emph{a.s.} denote \emph{almost surely}, i.e., an event holds with probability one.
\begin{definition}
[Sub-Gaussian] 
A random variable $X \in \R$ is said to be sub-Gaussian with variance proxy $\bar{\sigma}^2$ if $\forall \varsigma \in \R$,
\begin{equation*}
\E{\Exp{\varsigma (X - \E{X})}}
\leq
\Exp{\frac{1}{2}\varsigma^2\bar{\sigma}^2}.
\end{equation*}
\label{def:sub-Gaussian}
\end{definition}
\noindent The class of sub-Gaussian random variables covers a wide range of light-tailed distributions and will be adopted as a mild assumption on certain terms encoding new information beyond expectations.

\begin{lemma}
[Markov's Inequality] If $X$ is a nonnegative random variable, then for any $\lambda >0$,
\begin{equation*}
\Prob{X \geq \lambda} \leq \frac{\E{X}}{\lambda}.
\end{equation*}
\label{lemma:Markov}
\end{lemma}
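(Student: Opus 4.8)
The plan is to invoke the classical indicator-function argument. First I would fix $\lambda > 0$ and introduce the indicator random variable $\mathbf{1}\{X \geq \lambda\}$, equal to $1$ on the event $\{X \geq \lambda\}$ and $0$ elsewhere. The crux is the pointwise (deterministic) inequality
\begin{equation*}
\lambda\,\mathbf{1}\{X \geq \lambda\} \;\leq\; X\,\mathbf{1}\{X \geq \lambda\} \;\leq\; X ,
\end{equation*}
whose first step holds because on the event $\{X \geq \lambda\}$ the factor $X$ dominates $\lambda$ while off this event both sides vanish, and whose second step uses the hypothesis $X \geq 0$, so that discarding the nonnegative indicator cannot decrease the right-hand side.

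Next I would apply the expectation operator across this chain, using its monotonicity and linearity, to obtain $\lambda\,\E{\mathbf{1}\{X \geq \lambda\}} \leq \E{X}$. Recognizing that $\E{\mathbf{1}\{X \geq \lambda\}} = \Prob{X \geq \lambda}$ and dividing by $\lambda > 0$ then yields the claimed bound $\Prob{X \geq \lambda} \leq \E{X}/\lambda$. An equivalent one-line route is the estimate $\E{X} \geq \E{X\,\mathbf{1}\{X \geq \lambda\}} \geq \lambda\,\Prob{X \geq \lambda}$, which is the same computation reorganized; I would present whichever is more compact.

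I do not anticipate any genuine obstacle here, as this is a standard result. The only points worth keeping in mind are that $\E{X}$ is well defined as an element of $[0,+\infty]$ precisely because $X$ is nonnegative (so no integrability hypothesis is needed), and that the inequality carries content only when $\E{X} < \lambda$, being otherwise trivially implied by $\Prob{\cdot} \leq 1$.
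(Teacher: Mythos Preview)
Your argument is correct and is exactly the standard indicator-function proof of Markov's inequality. The paper itself does not supply a proof of this lemma; it is listed in the preliminaries as a cited background result, so there is nothing to compare against beyond noting that your proposal is the textbook route one would expect.
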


\begin{lemma}
[Law of Total Expectation]
Let $(\mathfrak{S},\mathcal{F},\mathbb{P})$ be a probability space where two $\sigma$-algebras $\mathcal{F}_1 \subseteq \mathcal{F}_2$ are defined. For any integrable random variable $X$ (i.e., $\E{|X|} < \infty$),
\begin{align*}
\EC{\EC{X}{\mathcal{F}_2}}{\mathcal{F}_1} = \EC{X}{\mathcal{F}_1} \quad a.s.
\end{align*}
\label{lemma:tower}
\end{lemma}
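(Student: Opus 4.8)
The plan is to reduce the claim to the defining (Radon--Nikodym) characterization of conditional expectation together with its almost-sure uniqueness. Recall that for any sub-$\sigma$-algebra $\mathcal{G}\subseteq\mathcal{F}$ and any integrable random variable $X$, the conditional expectation $\EC{X}{\mathcal{G}}$ is, up to a.s.\ equality, the unique $\mathcal{G}$-measurable integrable random variable $Y$ satisfying $\E{Y\mathbf{1}_A}=\E{X\mathbf{1}_A}$ for every $A\in\mathcal{G}$. Hence it suffices to show that $\EC{X}{\mathcal{F}_1}$ is itself a valid version of $\EC{\,\EC{X}{\mathcal{F}_2}\,}{\mathcal{F}_1}$, and then invoke uniqueness.

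First I would dispose of the measurability and integrability requirements: $\EC{X}{\mathcal{F}_1}$ is $\mathcal{F}_1$-measurable by construction, and $\E{|\EC{X}{\mathcal{F}_1}|}\le \E{\EC{|X|}{\mathcal{F}_1}}=\E{|X|}<\infty$, so it is integrable. Next I would verify the integral-matching identity on $\mathcal{F}_1$. Fixing an arbitrary $A\in\mathcal{F}_1$, the key observation is that $\mathcal{F}_1\subseteq\mathcal{F}_2$ forces $A\in\mathcal{F}_2$ as well; therefore the defining property of $\EC{X}{\mathcal{F}_2}$ gives $\E{\EC{X}{\mathcal{F}_2}\mathbf{1}_A}=\E{X\mathbf{1}_A}$, while the defining property of $\EC{X}{\mathcal{F}_1}$ gives $\E{\EC{X}{\mathcal{F}_1}\mathbf{1}_A}=\E{X\mathbf{1}_A}$. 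Combining the two, $\E{\EC{X}{\mathcal{F}_1}\mathbf{1}_A}=\E{\EC{X}{\mathcal{F}_2}\mathbf{1}_A}$ for every $A\in\mathcal{F}_1$, which is exactly the condition identifying $\EC{X}{\mathcal{F}_1}$ as a version of $\EC{\,\EC{X}{\mathcal{F}_2}\,}{\mathcal{F}_1}$. Almost-sure uniqueness of conditional expectation then yields $\EC{\,\EC{X}{\mathcal{F}_2}\,}{\mathcal{F}_1}=\EC{X}{\mathcal{F}_1}$ a.s., as claimed.

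This proof is essentially routine, so the ``main obstacle'' is conceptual rather than computational: one must remember to use the nesting hypothesis $\mathcal{F}_1\subseteq\mathcal{F}_2$ at precisely the step where an $\mathcal{F}_1$-event is reinterpreted as an $\mathcal{F}_2$-event, and one should cite the existence and a.s.\ uniqueness of conditional expectation as the background fact being relied upon rather than re-deriving it. If a more elementary exposition were desired, an alternative is to establish the identity first for indicators $X=\mathbf{1}_B$, extend to simple functions by linearity, to nonnegative $X$ via conditional monotone convergence, and finally to general integrable $X$ by splitting into positive and negative parts; but the uniqueness-based argument above is shorter and is the version I would present.
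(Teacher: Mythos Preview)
Your argument is correct and is the standard textbook proof of the tower property via the defining characterization and a.s.\ uniqueness of conditional expectation. Note, however, that the paper does not actually prove this lemma: it is stated without proof in the Preliminaries section as a well-known result (with references to standard probability texts), so there is no paper proof to compare against.
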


Inspired by the use of martingale tools in \cite{prajna2004stochastic,steinhardt2012finite,santoyo2019verification,cosner2023robust}, we recall the definitions and a concentration inequality needed for the subsequent analysis as follows.

\begin{definition}
[Martingale]
Let $(\mathfrak{S},\mathcal{F},\mathbb{P})$ be a probability space and $(\mathcal{F}_k)_{k\ge 0}$ be a filtration, i.e., an increasing sequence of sub-$\sigma$-algebras of $\mathcal{F}$. A stochastic process $(X_k)_{k\geq0}$ that is adapted to the filtration (i.e., $X_k$ is $\mathcal{F}_k$-measurable, $\forall k \in \mathbb{N}$) and is integrable at each $k$ is called a martingale if
\begin{equation*}
\EC{X_{k+1}}{\mathcal{F}_k} = X_k \quad \text{a.s.},\quad \forall k \in \mathbb{N},
\end{equation*}
a supermartingale if
$\EC{X_{k+1}}{\mathcal{F}_k} \leq X_k$ a.s., $\forall k \in \mathbb{N}$, or a submartingale if
$\EC{X_{k+1}}{\mathcal{F}_k} \geq X_k$ a.s., $\forall k \in \mathbb{N}$.
\label{def:Martingale}
\end{definition}

\begin{lemma}[Doob's Martingale Inequality]
If the stochastic process $(X_k)_{k\ge 0}$ is a nonnegative supermartingale, then
\begin{equation}
\Prob{\max_{0\leq k\le T} X_k \ge \lambda}
\leq
\frac{\E{X_0}}{\lambda},
\label{eqn:DoobMartingale}
\end{equation}
$\forall \lambda>0$, $\forall\, T \in \mathbb{N}$.
\label{lemma:Doob}
\end{lemma}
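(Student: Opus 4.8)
The plan is to reduce this finite-horizon maximal inequality to the single-time estimate $\E{X_\tau}\le\E{X_0}$ for a suitably truncated first-passage time $\tau$, exploiting that everything lives on the finite index set $\{0,1,\dots,T\}$ so that integrability is never in question. Concretely, I would introduce the stopping time $\tau:=\min\bigl(\{k\ge 0:X_k\ge\lambda\}\cup\{T\}\bigr)$, which takes values in $\{0,\dots,T\}$ and satisfies $\{\tau>k\}=\{X_0<\lambda,\dots,X_k<\lambda\}\in\mathcal{F}_k$ for every $k$. Writing $A:=\{\max_{0\le k\le T}X_k\ge\lambda\}$, the construction is arranged so that $X_\tau\ge\lambda$ on $A$, while on $A^c$ we retain only $X_\tau=X_T\ge 0$.

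Next I would show that the stopped process $Y_k:=X_{\tau\wedge k}$ is again a nonnegative supermartingale. The key computation is the telescoping identity $Y_{k+1}-Y_k=\mathbf{1}_{\{\tau>k\}}\,(X_{k+1}-X_k)$: since $\{\tau>k\}$ is $\mathcal{F}_k$-measurable, conditioning on $\mathcal{F}_k$ and invoking the hypothesis $\EC{X_{k+1}}{\mathcal{F}_k}\le X_k$ yields $\EC{Y_{k+1}}{\mathcal{F}_k}\le Y_k$ a.s., while nonnegativity and integrability of each $Y_k$ are inherited from $(X_k)$ over the finite horizon. Iterating this inequality and applying Lemma~\ref{lemma:tower} gives $\E{X_\tau}=\E{Y_T}\le\E{Y_0}=\E{X_0}$, using $\tau\le T$ so that $Y_T=X_\tau$ and $Y_0=X_0$.

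Finally I would split the expectation of $X_\tau$ over $A$ and $A^c$ and lower-bound it: $\E{X_0}\ge\E{X_\tau}=\E{X_\tau\mathbf{1}_A}+\E{X_\tau\mathbf{1}_{A^c}}\ge\E{X_\tau\mathbf{1}_A}\ge\lambda\,\Prob{A}$, where the middle inequality uses $X_\tau\ge 0$ on $A^c$ and the last uses $X_\tau\ge\lambda$ on $A$. Dividing through by $\lambda>0$ then yields \eqref{eqn:DoobMartingale}, and the quantifiers over $\lambda$ and $T$ are immediate since neither was used beyond $\lambda>0$ and $T\in\mathbb{N}$.

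The step I expect to be the main obstacle is the verification in the second paragraph that truncation preserves the supermartingale property — essentially a self-contained instance of the optional stopping theorem for bounded stopping times — since the remaining arguments are routine once the first-passage events are in place, relying only on $\mathcal{F}_k$-measurability of $\{\tau>k\}$ and the nonnegativity hypothesis. An equivalent route that sidesteps stopping times altogether is to work directly with the disjoint first-passage events $A_k:=\{X_0<\lambda,\dots,X_{k-1}<\lambda,\;X_k\ge\lambda\}\in\mathcal{F}_k$, apply the supermartingale property conditioned on each $A_k$, and sum; but the stopped-process argument above is the cleanest.
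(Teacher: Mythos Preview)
Your argument is correct and is the standard textbook proof of Doob's maximal inequality via the bounded stopping time $\tau=\min\{k\le T:X_k\ge\lambda\}\wedge T$. Note, however, that the paper does not supply its own proof of this lemma: it is stated in the Preliminaries section as a classical result being recalled for later use (in the proof of Lemma~\ref{lemma:Sk_bound}), so there is no in-paper argument to compare against. Your proposal would serve perfectly well as a self-contained justification if one were required.
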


\noindent Intuitively, Doob's martingale inequality can be viewed as an extension of Markov's inequality for nonnegative supermartingales, which bounds the probability that a stochastic process ever exceeds a certain level in terms of its initial expectation.

\subsection{Lie Group} \label{Section:LieGrouptool}
Let $\mathcal{G}$ be an $n$-dimensional Lie group (which is also a smooth manifold) with binary operation $\circ$ and identity $\mathcal{I}$, and let $\mathfrak{g} = T_{\mathcal{I}} \mathcal{G}$ denote its Lie algebra (i.e., tangent space at the identity $\mathcal{I}$). For a smooth map $F: \mathcal{G} \to \mathcal{G}$, its pushforward $(dF)_g: T_g\mathcal{G} \to T_{F(g)}\mathcal{G}$ at $g \in \mathcal{G}$ is defined as
\begin{equation*}
(dF)_g (\dot{\gamma}(0)) = \frac{d}{dt}\Big|_{t=0}F(\gamma(t)),
\end{equation*}
for any smooth curve $\gamma(t) \in \mathcal{G}$ such that $\gamma(0) = g$. Each $\xi \in \mathfrak{g}$ induces a left-invariant\footnote{We adopt the left-invariant convention throughout this paper, unless otherwise stated.} vector field
\begin{equation*}
\Gamma_\xi(g) 
=
(d L_g)_{\mathcal{I}} (\xi),
\end{equation*}
where the left translation map $L_g: \mathcal{G}\to \mathcal{G}$ is such that $L_g(G) = g \circ G$, $\forall G \in \mathcal{G}$.

\begin{definition}
[Flow]
For any $\xi \in \mathfrak{g}$, the flow $\Phi^{\Gamma_\xi}_t(g): \mathcal{G}\to \mathcal{G}$ generated by a vector field $\Gamma_\xi$ is such that
\begin{equation*}
\frac{\partial}{\partial t} \Phi^{\Gamma_\xi}_t(g) = \Gamma_\xi (\Phi^{\Gamma_\xi}_t(g)), \quad \Phi^{\Gamma_\xi}_0(g) = g.
\end{equation*}
For notational simplicity, we denote $\Phi_t^\xi \coloneqq \Phi_t^{\Gamma_\xi}$.
\end{definition}

\begin{definition}
[Exp and Log]
The exponential map $\mathrm{Exp}: \mathfrak{g}\to \mathcal{G}$ is defined as
\begin{equation*}
\mathrm{Exp}(\xi) = 
\Phi^\xi_1({\mathcal{I}}),
\end{equation*}
and the logarithm map $\mathrm{Log}: \mathcal{G}\to\mathfrak{g}$ is the 
inverse of $\mathrm{Exp}$.
\end{definition}

For any $\xi \in \mathfrak{g}$ and $g \in \mathcal{G}$, the flow of the left-invariant vector field $\Gamma_\xi$ satisfies 
\begin{equation*}
\Phi_t^\xi(g) = g \circ \mathrm{Exp}(t\xi).
\end{equation*}

The rate of change of a function along the flow generated by a vector field is quantified by the Lie derivative.

\begin{definition}
[Lie derivative]
The Lie derivative of a continuously differentiable function $f: \mathcal{G}\to \R$ along a vector field $\Gamma_\xi$ is defined as
\begin{equation*}
(\mathcal{L}_{\Gamma_\xi} f) (g)
=
\frac{d}{dt}\Big|_{t=0} f(\Phi^\xi_t(g)).
\end{equation*}
For notational simplicity, we denote $\mathcal{L}_{\xi} f \coloneqq \mathcal{L}_{\Gamma_\xi} f$.
\end{definition}

Infinitesimal motions (identified with Lie algebra elements) are transformed under changes of reference by the group adjoint action, and their lack of commutativity is encoded by the Lie bracket.

\begin{definition}
[Adjoint and Lie bracket]
The conjugation map $C_g:\mathcal{G}\to \mathcal{G}$ is defined as
\begin{equation*}
C_g(G) =
g \circ G \circ g^{-1}, \quad \forall G \in \mathcal{G}.
\end{equation*}
The group adjoint action is defined as the pushforward of conjugation at the identity, i.e.,
\begin{equation*}
\mathrm{Ad}_g
=
(dC_g)_{\mathcal{I}} : \mathfrak{g}\to \mathfrak{g},
\end{equation*}
and the Lie bracket is defined as the infinitesimal generator of the group adjoint action,
\begin{equation*}
[\xi,\zeta]
=
\frac{d}{dt}\Big|_{t=0} \mathrm{Ad}_{\mathrm{Exp}(t\xi)} (\zeta), \quad \forall \xi, \zeta \in \mathfrak{g}.
\end{equation*}
\label{def:adjoint}
\end{definition}

How the noncommutativity of infinitesimal motions accumulates into finite transformations is captured by the Baker–Campbell–Hausdorff (BCH) formula, which is essential for the analysis of uncertainty propagation and state estimation on Lie groups.

\begin{lemma}
[BCH formula]
For sufficiently small $\xi, \zeta \in \mathfrak{g}$, the BCH formula gives
\begin{align*}
\mathrm{Log}
\left(
\mathrm{Exp}
(\xi) \circ \mathrm{Exp}
(\zeta)
\right)
=
\xi 
&+ 
\zeta
+ 
\frac{1}{2}[\xi,\zeta] 
+ 
\frac{1}{12}[\xi,[\xi,\zeta] ] \\
&+ 
\frac{1}{12}[\zeta,[\zeta,\xi] ]
+ \mathcal{O}(\|(\xi,\zeta)\|^4).
\end{align*}
\label{lemma:BCH}
\end{lemma}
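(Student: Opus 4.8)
The plan is to reduce the statement to a formal power-series computation. Since the Lie groups used in this paper (namely $\mathrm{SE}(n)$ and its subgroups) act faithfully by matrices, I would first establish the identity in the matrix setting, where $\mathrm{Exp}$ is the matrix exponential, $\circ$ is matrix multiplication, and the bracket is the commutator $[\xi,\zeta]=\xi\zeta-\zeta\xi$; for a general Lie group the formula then follows from Ado's theorem together with the analyticity of $\mathrm{Exp}$, $\mathrm{Log}$, and $\circ$ near $\mathcal{I}$, which forces the low-order expansion of the left-hand side to be expressible through iterated brackets alone.

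To extract the coefficients I would treat $\xi$ and $\zeta$ as first-order quantities and organize the expansion by total degree, using a differential-equation argument. Fix small $\xi,\zeta$, set $\gamma(t):=\mathrm{Exp}(\xi)\circ\mathrm{Exp}(t\zeta)$ and $Z(t):=\mathrm{Log}(\gamma(t))$, so $Z(0)=\xi$ and $Z(1)$ is the object of interest. Differentiating $\gamma$ shows that its left-trivialized velocity $\gamma(t)^{-1}\dot\gamma(t)$ is identically $\zeta$; on the other hand, writing $\gamma=\mathrm{Exp}(Z)$ and invoking the standard identity for the differential of the exponential gives $\gamma^{-1}\dot\gamma=\frac{1-e^{-\mathrm{ad}_{Z}}}{\mathrm{ad}_{Z}}\dot Z$, where $\mathrm{ad}_\xi(\cdot):=[\xi,\cdot]$. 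Inverting the operator $\frac{1-e^{-\mathrm{ad}_{Z}}}{\mathrm{ad}_{Z}}$ (invertible for $Z$ small) and expanding via the Bernoulli generating function yields
\begin{equation*}
\dot Z(t)=\zeta+\tfrac12[Z(t),\zeta]+\tfrac{1}{12}[Z(t),[Z(t),\zeta]]+\mathcal{O}(\|(\xi,\zeta)\|^4),\qquad Z(0)=\xi.
\end{equation*}

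I would then solve this by Picard iteration: the zeroth step gives $Z(t)=\xi+t\zeta+\mathcal{O}(\|(\xi,\zeta)\|^2)$; substituting back and integrating produces the second-order term $\tfrac{t}{2}[\xi,\zeta]$; and substituting once more — this time keeping, inside $\tfrac12[Z,\zeta]$, the contribution of the order-two part of $Z$, and inside $\tfrac{1}{12}[Z,[Z,\zeta]]$ the cross term $[t\zeta,[\xi,\zeta]]$ in addition to $[\xi,[\xi,\zeta]]$ — and integrating over $[0,1]$ gives the third-order contributions, which collapse to $\tfrac{1}{12}[\xi,[\xi,\zeta]]+\tfrac{1}{12}[\zeta,[\zeta,\xi]]$ after using antisymmetry of the bracket, with remainder of order $\|(\xi,\zeta)\|^4$; evaluating at $t=1$ gives the claim. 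An alternative, purely algebraic route would expand $e^{\xi}e^{\zeta}=1+W$ as a power series in the associative matrix algebra, apply $\mathrm{Log}(1+W)=W-\tfrac12W^2+\tfrac13W^3-\cdots$, and collect by total degree: degree one gives $\xi+\zeta$, degree two collapses to $\tfrac12[\xi,\zeta]$, and degree three to the two third-order brackets. I expect the only genuine obstacle to be the third-order bookkeeping — in particular not discarding the mixed bracket $[\zeta,[\xi,\zeta]]$ — together with the (classical, citable) fact that the surviving terms at each order can be rewritten using the bracket alone, so that the result is intrinsic and transfers from the matrix model to an abstract Lie group.
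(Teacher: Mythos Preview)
The paper does not prove this lemma: it is stated in the Preliminaries section as a classical result from Lie group theory, recalled without argument and only used later (e.g., to propagate covariances on $\mathrm{SE}(n)$). So there is no ``paper's own proof'' to compare against.

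That said, your sketch is a correct and standard derivation. The ODE route --- defining $Z(t)=\mathrm{Log}(\mathrm{Exp}(\xi)\mathrm{Exp}(t\zeta))$, using the differential of $\mathrm{Exp}$ to get $\dot Z=\frac{\mathrm{ad}_Z}{1-e^{-\mathrm{ad}_Z}}\zeta$, and iterating --- reproduces the coefficients $\tfrac12,\tfrac{1}{12},\tfrac{1}{12}$ exactly as you indicate, and your handling of the mixed third-order bracket is right (the $\tfrac{t}{4}[[\xi,\zeta],\zeta]$ from the quadratic term and the $\tfrac{t}{12}[\zeta,[\xi,\zeta]]$ from the cubic term combine to $\tfrac{t}{6}[\zeta,[\zeta,\xi]]$ after antisymmetry, integrating to $\tfrac{1}{12}$). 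The alternative $\mathrm{Log}(1+W)$ expansion you mention also works. Invoking Ado's theorem is harmless here but somewhat heavier than needed: since the statement is local and analytic in $(\xi,\zeta)$, and each side is intrinsically defined on $\mathfrak{g}$, the matrix computation already determines the universal formal identity, which then holds in any Lie algebra.
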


\begin{definition}
Given an ordered basis $(E_1, \dots, E_n)$ of $\mathfrak{g}$, the isomorphism $(\cdot)^{\vee}: \mathfrak{g} \to \R^n$ is defined as
\begin{equation*}
\left(\sum_{i=1}^n \xi_i E_i\!\right)^{\!\!\vee} 
\!= (\xi_1,\dots,\xi_n)^\top,
\end{equation*}
and the isomorphism $(\cdot)^{\wedge}: \R^n \to \mathfrak{g}$ is the inverse of $(\cdot)^{\vee}$.
\end{definition}

The variation of a smooth function along the flow generated by a vector field is described by the Taylor expansion, which will be leveraged for stochastic safety-critical control synthesis on Lie groups, as will be seen in Section~\ref{Section:LieGroup}.

\begin{lemma}
[Taylor expansion]
For any $g \in \mathcal{G}$, the left-invariant Taylor expansion of
a smooth function $f:\mathcal{G}\to \R$ about $g$ is given by
\begin{align*}
f(g \circ \mathrm{Exp}(\xi))
&=
f(g) 
+
\sum_{i=1}^n  (\mathcal{L}_{E_i}f)
(g) \,\xi_i \\
&\quad\;+
\frac{1}{2}
\sum_{i=1}^n 
\sum_{j=1}^n
(\mathcal{L}_{E_i} \mathcal{L}_{E_j} f)
(g)
\,\xi_i \xi_j
+ \mathcal{O}(\|\xi\|^3).
\end{align*}
\label{lemma:Taylor}
\end{lemma}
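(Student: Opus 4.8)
The plan is to reduce the claim to the ordinary one-variable Taylor theorem by restricting $f$ to the integral curve of the left-invariant vector field through $g$. First I would fix $g\in\mathcal{G}$ and $\xi\in\mathfrak{g}$ small enough that $\mathrm{Exp}(s\xi)$ is defined for all $s\in[0,1]$ — possible since $\mathrm{Exp}$ is a local diffeomorphism near the origin of $\mathfrak{g}$ — and set $\phi:[0,1]\to\R$, $\phi(s)=f(g\circ\mathrm{Exp}(s\xi))=f(\Phi_s^\xi(g))$, using the identity $\Phi_t^\xi(g)=g\circ\mathrm{Exp}(t\xi)$ recalled above. Since $\phi(0)=f(g)$ and $\phi(1)=f(g\circ\mathrm{Exp}(\xi))$, it suffices to evaluate $\phi$, $\phi'$, $\phi''$ at $s=0$ and to control the third-order remainder.

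The crux is computing $\phi'$. Using the one-parameter group property $\Phi_{s+h}^\xi=\Phi_h^\xi\circ\Phi_s^\xi$ and differentiating in $h$ at $h=0$, the definition of the Lie derivative gives $\phi'(s)=(\mathcal{L}_\xi f)(\Phi_s^\xi(g))$. Thus $\phi'$ has the same form as $\phi$ with $f$ replaced by $\mathcal{L}_\xi f$, and iterating yields $\phi^{(m)}(s)=(\mathcal{L}_\xi^m f)(\Phi_s^\xi(g))$ for $m=0,1,2,3$; in particular $\phi'(0)=(\mathcal{L}_\xi f)(g)$ and $\phi''(0)=(\mathcal{L}_\xi^2 f)(g)$. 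Because $\xi\mapsto\Gamma_\xi$ is linear and the coordinates $\xi_i$ in $\xi=\sum_i\xi_iE_i$ are constants, $\mathcal{L}_\xi f=\sum_{i=1}^n\xi_i\,\mathcal{L}_{E_i}f$ and $\mathcal{L}_\xi^2 f=\sum_{i=1}^n\sum_{j=1}^n\xi_i\xi_j\,\mathcal{L}_{E_i}\mathcal{L}_{E_j}f$. Substituting these into $\phi(1)=\phi(0)+\phi'(0)+\tfrac12\phi''(0)+R$ reproduces exactly the stated zeroth-, first-, and second-order terms; no symmetrization of $\mathcal{L}_{E_i}\mathcal{L}_{E_j}f$ in $i,j$ is needed, since its antisymmetric part (equal to $\mathcal{L}_{[E_i,E_j]}f$, reflecting the non-commutativity of left-invariant fields) is annihilated against the symmetric coefficients $\xi_i\xi_j$.

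It remains to show $R=\mathcal{O}(\|\xi\|^3)$, which I expect to be the only place requiring care. Writing the integral form $R=\int_0^1\frac{(1-s)^2}{2}\phi'''(s)\,ds$ with $\phi'''(s)=(\mathcal{L}_\xi^3 f)(\Phi_s^\xi(g))=\sum_{i,j,k}\xi_i\xi_j\xi_k\,(\mathcal{L}_{E_i}\mathcal{L}_{E_j}\mathcal{L}_{E_k}f)(\Phi_s^\xi(g))$, the point is to bound this uniformly in $s$: for $\xi$ ranging over a fixed bounded neighborhood of the origin, the points $\Phi_s^\xi(g)=g\circ\mathrm{Exp}(s\xi)$, $s\in[0,1]$, all lie in a common compact neighborhood of $g$ on which the finitely many continuous functions $\mathcal{L}_{E_i}\mathcal{L}_{E_j}\mathcal{L}_{E_k}f$ are bounded by some constant; hence $|\phi'''(s)|\le C\sum_{i,j,k}|\xi_i\xi_j\xi_k|\le C'\|\xi\|^3$ and $|R|\le\frac{C'}{6}\|\xi\|^3$. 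This completes the argument; an entirely analogous induction in $m$ would give the expansion to any order, with remainder $\mathcal{O}(\|\xi\|^{m+1})$.
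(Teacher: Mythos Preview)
Your proof is correct and follows the standard route for this classical result. The paper itself does not prove this lemma: it is stated in the preliminaries as a known fact from Lie-group theory, with a reference to standard texts, so there is no ``paper's own proof'' to compare against. One minor remark: on a Lie group the exponential map $\mathrm{Exp}:\mathfrak{g}\to\mathcal{G}$ is globally defined (left-invariant vector fields are complete), so your restriction to ``$\xi$ small enough that $\mathrm{Exp}(s\xi)$ is defined for all $s\in[0,1]$'' is unnecessary, though of course harmless.
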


The mean and covariance are defined as follows to characterize the statistical moments on Lie groups.
\begin{definition}
[Mean and covariance]
Let $G \in \mathcal{G}$ be a random variable on a Lie group with its probability density function $f_G: \mathcal{G} \to \R_{\geq 0}$, then the mean $\mu = \E{G} \in \mathcal{G}$ is such that
\begin{equation*}
\int_{\mathcal{G}}
\mathrm{Log}^\vee (\mu^{-1} \circ g) \,f_G(g) \, \mathrm{d}g = 0,
\end{equation*}
and the covariance about the mean is defined as
\begin{equation*}
\Sigma =
\int_{\mathcal{G}}
\mathrm{Log}^\vee (\mu^{-1} \circ g)
(\mathrm{Log}^\vee (\mu^{-1} \circ g))^\top
f_G(g) \, \mathrm{d}g,
\end{equation*}
where $\mathrm{d}g$ denotes the Haar measure.
\label{def:geometricmoment}
\end{definition}

\section{Stochastic CBF under State Estimation Uncertainty} \label{Section:StochasticCBF}
Consider a discrete-time stochastic system evolving on an $n$-dimensional smooth manifold $\mathcal{M}$, 
\begin{equation}
\begin{cases}
x_{k+1} = F\big(x_k,u_k,\varepsilon_k\big), \\
z_k = H(x_k,\epsilon_k),
\end{cases}
\label{eqn:system}
\end{equation}
where $x_k \in \mathcal{M}$ denotes the state at time $k$, $u_k \in \mathcal{U} \subset \mathbb{R}^m$ is the control input, the measurement $z_k \in \mathcal{Z}$ takes values on a measurement manifold of certain dimension, $F: \mathcal{M} \times \R^m \times \R^\ell \to \mathcal{M}$ denotes the system dynamics, and $H: \mathcal{M} \times \R^l \to \mathcal{Z}$ denotes the observation model. The process noise $\varepsilon_k \in \R^\ell$ and the measurement noise $\epsilon_k \in \R^l$ are zero-mean Gaussian random variables with covariances $\Sigma_{\varepsilon_k} \in \mathbb{S}^\ell_+$ and $\Sigma_{\epsilon_k} \in \mathbb{S}^l_+$, respectively. They are assumed to be mutually independent and independent across time. In general, the process noise and the measurement noise are not additive with respect to the state and measurement manifolds, respectively.

The observation filtration is defined as $(\mathcal{F}_k)_{k\ge 0}$, where
\begin{equation}
\mathcal{F}_k^- = \sigma\big(x_0,u_{0:k-1},z_{0:k-1}\big)
\subset
\mathcal{F}_k = \sigma\big(x_0,u_{0:k-1},z_{0:k}\big), \notag
\end{equation}
and the state filtration is defined as $(\mathscr{F}_k)_{k\ge 0}$, where
\begin{equation*}
\mathscr{F}_k 
= \sigma\big(x_0,\varepsilon_{0:k-1},\epsilon_{0:k},u_{0:k-1}\big).
\end{equation*}
At each time $k$, the state $x_k$ is $\mathscr{F}_k$-measurable while the control input $u_k$ is designed to be $\mathcal{F}_k$-measurable, $\varepsilon_k$ is independent of $\mathcal{F}_k$, and $\epsilon_k$ is independent of $\mathcal{F}_k^-$, so $\mathcal{F}_{k-1} = \mathcal{F}_k^- \subset \mathcal{F}_k$.

Let $h:\mathcal{M} \to \mathbb{R}$ be of class $C^2$, and the safe set be
\begin{equation}
\mathcal{C} 
= 
\left\{ x\in\mathcal{M} 
\,\big|\,
h(x)\ge 0
\right\}.
\label{eqn:safeset}
\end{equation}
Denote $Y_k = h(x_k)$, which is $\mathscr{F}_k$-measurable. Then, $x_k \in \mathcal{C}$ is equivalent to $Y_k \geq 0$, $\forall k \in \{0,\dots,T\}$ with $T \in \mathbb{N}$. Since the process noise $\varepsilon_k$ and the measurement noise $\epsilon_k$ have unbounded supports, infinite-time safety (i.e., forward invariance) is impossible to achieve \cite{steinhardt2012finite,so2023almost}, so we study the finite-time safety (e.g., \cite{cosner2023robust,santoyo2021barrier}) defined as follows.

\begin{definition}
For any $T \in \mathbb{N}$ and initial state $x_0 \in \mathcal{C}$, 
the $T$-step exit probability is defined as
\begin{equation}
P_{\text{\rm exit}}(T,x_0)
= 
\mathbb{P}
\left\{
\min_{0\leq k\leq T} Y_k < 0
\right\}.
\label{eqn:exit-prob}
\end{equation}
\label{def:exit-prob}
\end{definition}

However, unlike \cite{santoyo2021barrier,cosner2023robust,cosner2024bounding}, where the ground-truth state $x_k$ is assumed to be fully known at every time step $k$, we need to analyze $P_{\text{\rm exit}}(T,x_0)$ while considering the uncertainty propagation of the state estimates. To this end, first, we define the state estimation-aware stochastic CBF (SEA-SCBF) in Definition~\ref{def:seaCBF}.

\begin{definition}[SEA-SCBF]
\label{def:seaCBF}
A function $h \in C^2(\mathcal{M})$ is a state estimation-aware stochastic CBF (SEA-SCBF) with respect to \eqref{eqn:system} and \eqref{eqn:safeset}, if there exists an $\alpha\in(0,1]$ such that at each time step $k$, whenever $\widetilde{Y}_k \geq 0$, there exists an $\mathcal{F}_k$-measurable coefficient $\beta_k\ge 0$ and an $\mathcal{F}_k$-measurable control input $u_k \in \mathcal{U}$ such that
\begin{equation}
\EC{h(x_{k+1})}{\mathcal{F}_k} 
- 
\beta_k
\sqrt{\VarC{h(x_{k+1})}{\mathcal{F}_k}}
\geq \alpha  \widetilde{Y}_k
\quad \text{a.s.},
\label{eqn:SEA-SCBF}
\end{equation}
where $\widetilde{Y}_k = \EC{h(x_{k})}{\mathcal{F}_k}$.
\end{definition}

In general, $\EC{h(x_{k+1})}{\mathcal{F}_k}$ and $\VarC{h(x_{k+1})}{\mathcal{F}_k}$ depend on the control input $u_k$ while $\beta_k$ and $\widetilde{Y}_k$ do not. The constraint \eqref{eqn:SEA-SCBF} is motivated by the intuition that given a safety margin characterized by the posterior estimate of the current barrier function value, actions that lead to a high predicted uncertainty must be compensated by a sufficiently large increase in the predicted barrier function value.

Then, we define two innovation-like random variables
\begin{equation}
\Delta_{k+1}
=
\widetilde{Y}_{k+1}
- \mathbb{E}\big[\, \widetilde{Y}_{k+1} \,\big|\, \mathcal{F}_k\,\big],
\label{eqn:Delta}
\end{equation}
\begin{equation}
\delta_{k} = Y_{k} - \widetilde{Y}_{k}, \label{eqn:delta}
\end{equation}
which are assumed to be sub-Gaussian, conditionally on $\mathcal{F}_k$,  with variance proxies $\bar{\sigma}_{k+1}^2$ and $\bar{\tau}_k^2$, respectively. Based on Lemma~\ref{lemma:tower}, we have
\begin{align*}
\EC{\Delta_{k+1}}{\mathcal{F}_k}
&=
\mathbb{E}\big[\, \widetilde{Y}_{k+1} \,\big|\, \mathcal{F}_k\,\big]
-
\mathbb{E}\big[\, \mathbb{E}\big[\, \widetilde{Y}_{k+1} \,\big|\, \mathcal{F}_k\,\big] \,\big|\, \mathcal{F}_k\,\big] \\
&=0.
\end{align*}
and \eqref{eqn:Delta} becomes
\begin{align}
\Delta_{k+1}
=\,&
\widetilde{Y}_{k+1}
- 
\EC{\EC{Y_{k+1}}{\mathcal{F}_{k+1}}}{\mathcal{F}_k}
\notag \\
=\,&
\widetilde{Y}_{k+1} 
- 
\EC{Y_{k+1}}{\mathcal{F}_k}. \notag
\end{align}

In addition, we define the stochastic process $(M_k)_{k\geq0}$, where
\begin{equation}
M_k =
\begin{cases}
0, & k=0, \\
\sum_{i=0}^{k-1}\alpha^{-(i+1)}\Delta_{i+1},& k\geq 1. 
\end{cases}
\label{eqn:M_k}
\end{equation}
According to \eqref{eqn:M_k}, we have
\begin{equation}
M_{k+1} = \sum_{i=0}^{k}\alpha^{-(i+1)}\Delta_{i+1} = M_k + \alpha^{-(k+1)}\Delta_{k+1}.
\label{eqn:M_k_2}
\end{equation}
Taking expectation conditioned on $\mathcal{F}_k$ on both sides of \eqref{eqn:M_k_2} yields
\begin{equation*}
\EC{M_{k+1}}{\mathcal{F}_k}
=
\EC{M_k}{\mathcal{F}_k}
+
\alpha^{-(k+1)}\,
\EC{\Delta_{k+1}}{\mathcal{F}_k}.
\end{equation*}
Since $M_k$ and $\mathbb{E}\big[\, \widetilde{Y}_{k+1} \,\big|\, \mathcal{F}_k\,\big]$ are $\mathcal{F}_k$-measurable, and $\EC{\Delta_{k+1}}{\mathcal{F}_k} = 0$, then we have $\EC{M_k}{\mathcal{F}_k} = M_k$, and $\EC{M_{k+1}}{\mathcal{F}_k}
= M_k$ a.s., $\forall k \in \mathbb{N}$. Hence, $(M_k)_{k\geq0}$ is a martingale. (Note that $\EC{M_1}{\mathcal{F}_0} = M_0 = 0$, whereas $\EC{M_{k+1}}{\mathcal{F}_k} \neq 0$, $\forall k \geq 1$.) 

Next, we construct a nonnegative supermartingale related to the SEA-SCBF value $Y_k$ as follows, so that the concentration inequality \eqref{eqn:DoobMartingale} can be applied afterwards.
\begin{lemma}
Given $\tau \in \R$, the stochastic process $(S_k)_{k\geq0}$, defined as
\begin{equation}
S_k =
\Exp{-\tau M_k- \frac{1}{2}\tau^2 V_k},
\label{eqn:S_k}
\end{equation}
in which
\begin{equation}
V_k =
\begin{cases}
0, & k=0, \\
\sum_{i=0}^{k-1}
\bar{\sigma}_{i+1}^2\alpha^{-2(i+1)},& k\geq 1, 
\end{cases}
\label{eqn:V_k}
\end{equation}
is a nonnegative supermartingale.
\label{lemma:supermartingale}
\end{lemma}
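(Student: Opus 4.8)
The plan is to check the three defining properties of a nonnegative supermartingale for $(S_k)_{k\ge 0}$: nonnegativity, integrability, and the one-step inequality $\EC{S_{k+1}}{\mathcal{F}_k}\le S_k$ a.s. (adaptedness is already available, since $M_k$ and $V_k$ were noted above to be $\mathcal{F}_k$-measurable, hence so is $S_k$). Nonnegativity is immediate because $S_k$ is the exponential of a real-valued random variable. For the one-step inequality, I would first use the recursion $M_{k+1}=M_k+\alpha^{-(k+1)}\Delta_{k+1}$ from \eqref{eqn:M_k_2} together with $V_{k+1}=V_k+\bar\sigma_{k+1}^2\alpha^{-2(k+1)}$ from \eqref{eqn:V_k} to factor
\begin{equation*}
S_{k+1}=S_k\,\Exp{-\tau\alpha^{-(k+1)}\Delta_{k+1}-\tfrac12\tau^2\bar\sigma_{k+1}^2\alpha^{-2(k+1)}}.
\end{equation*}

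Since $M_k$, $V_k$ (hence $S_k$) and the variance proxy $\bar\sigma_{k+1}^2$ are $\mathcal{F}_k$-measurable, I can pull the $\mathcal{F}_k$-measurable factors out of the conditional expectation, obtaining
\begin{equation*}
\EC{S_{k+1}}{\mathcal{F}_k}=S_k\,\Exp{-\tfrac12\tau^2\bar\sigma_{k+1}^2\alpha^{-2(k+1)}}\,\EC{\Exp{-\tau\alpha^{-(k+1)}\Delta_{k+1}}}{\mathcal{F}_k}.
\end{equation*}
Next I would invoke the conditional sub-Gaussianity of $\Delta_{k+1}$ given $\mathcal{F}_k$ with variance proxy $\bar\sigma_{k+1}^2$, together with $\EC{\Delta_{k+1}}{\mathcal{F}_k}=0$ established above; applying Definition~\ref{def:sub-Gaussian} in its conditional form with $\varsigma=-\tau\alpha^{-(k+1)}$ gives $\EC{\Exp{-\tau\alpha^{-(k+1)}\Delta_{k+1}}}{\mathcal{F}_k}\le \Exp{\tfrac12\tau^2\alpha^{-2(k+1)}\bar\sigma_{k+1}^2}$. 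Substituting this bound, the two exponential factors cancel exactly and I obtain $\EC{S_{k+1}}{\mathcal{F}_k}\le S_k$ a.s. Integrability then follows by induction from $S_0=1$: taking expectations in the displayed identity and using the same sub-Gaussian bound yields $\E{S_{k+1}}\le\E{S_k}\le 1<\infty$, so each $S_k$ is integrable (and $\E{S_k}\le 1$ for all $k$).

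The step that I expect to require the most care is the measurability bookkeeping — in particular, confirming that $V_k$, equivalently each variance proxy $\bar\sigma_{i+1}^2$, is $\mathcal{F}_k$-measurable, so that both the factorization and the pull-out of $S_k$ from the conditional expectation are legitimate; this relies on reading $\bar\sigma_{k+1}^2$ as the deterministic (or $\mathcal{F}_k$-measurable) conditional variance proxy of $\Delta_{k+1}$ given $\mathcal{F}_k$, consistent with the sub-Gaussian hypothesis imposed after \eqref{eqn:delta}. Beyond that single invocation of the sub-Gaussian bound, the argument is an exact exponential cancellation with no further inequalities lost.
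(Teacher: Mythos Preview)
Your proposal is correct and follows essentially the same route as the paper's proof: factor $S_{k+1}=S_k\Exp{-\tau\alpha^{-(k+1)}\Delta_{k+1}-\tfrac12\tau^2\bar\sigma_{k+1}^2\alpha^{-2(k+1)}}$, pull out the $\mathcal{F}_k$-measurable pieces, and apply the conditional sub-Gaussian bound to cancel the exponentials. The only differences are that the paper treats $(\bar\sigma_{k+1}^2)_{k\ge0}$ as a \emph{deterministic} sequence (resolving your measurability concern about $V_k$), and that you explicitly verify integrability via $\E{S_k}\le 1$, which the paper leaves implicit.
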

\begin{proof}
Apparently, $(S_k)_{k\geq0}$ is nonnegative. Since $\Delta_k$ is $\mathcal{F}_k$-measurable, then $S_k$ is $\mathcal{F}_k$-measurable, $\forall k \in \mathbb{N}$. Using \eqref{eqn:M_k_2}, we can obtain
\begin{align}
S_{k+1}
&= 
\Exp{-\tau M_{k+1}-\frac{1}{2}\tau^2
V_{k+1}} \notag \\
&=
S_k
\cdot \,
\Exp{
-\tau \alpha^{-(k+1)}\Delta_{k+1} 
-\frac{1}{2}\tau^2
\bar{\sigma}_{k+1}^2\alpha^{-2(k+1)}}.
\label{eqn:Lk_derive}
\end{align}
Based on the $\mathcal{F}_k$-measurability of $S_k$, taking expectation conditioned on $\mathcal{F}_k$ on both sides of \eqref{eqn:Lk_derive} yields
\begin{align}
\EC{S_{k+1}}{\mathcal{F}_k}
=\,
&S_k 
\cdot
\Exp{- \frac{1}{2}\tau^2\bar{\sigma}_{k+1}^2\alpha^{-2(k+1)}} \notag \\
&\quad
\cdot
\EC{\Exp{-\tau\alpha^{-(k+1)}\Delta_{k+1}}}{\mathcal{F}_k}.
\label{eqn:E_Lk+1}
\end{align}
Based on the assumption that $\Delta_{k+1}$ follows sub-Gaussian distribution conditionally on $\mathcal{F}_k$, there exists a deterministic sequence $(\bar{\sigma}_{k+1}^2)_{k\ge 0}$ such that for any $\varsigma \in \mathbb{R}$,
\begin{equation}
\EC{\Exp{\varsigma\Delta_{k+1}}}{\mathcal{F}_k}
\leq
\Exp{\frac{1}{2}\varsigma^2\bar{\sigma}_{k+1}^2},\notag
\end{equation}
and thus we have
\begin{align}
&\EC{\Exp{-\tau\alpha^{-(k+1)}\Delta_{k+1}}}{\mathcal{F}_k} \notag \\
&\leq
\Exp{\frac{1}{2}\tau^2\alpha^{-2(k+1)}\bar{\sigma}_{k+1}^2}.
\label{eqn:subG-increment}
\end{align}
Combining \eqref{eqn:E_Lk+1} and \eqref{eqn:subG-increment} results in $\EC{S_{k+1}}{\mathcal{F}_k} \leq S_k$ a.s., $\forall k \in \mathbb{N}$. Hence, $(S_k)_{k\geq0}$ is a nonnegative supermartingale.
\end{proof}

We now construct a sub-Gaussian-like bound on $M_k$.
\begin{lemma}
For any $\lambda>0$ and any
$T\in\mathbb{N}$,
\begin{equation}
\Prob{
\max_{0\leq k \leq T}(-M_k)\geq \lambda
}
\leq
\Exp{-\frac{\lambda^2}{2V_T}},
\label{eqn:Sk_bound}
\end{equation}
where $V_T$ is given by \eqref{eqn:V_k}.
\label{lemma:Sk_bound}
\end{lemma}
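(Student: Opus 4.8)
The plan is to combine the nonnegative supermartingale $(S_k)_{k\ge 0}$ from Lemma~\ref{lemma:supermartingale} with Doob's martingale inequality (Lemma~\ref{lemma:Doob}) by a suitable choice of the free parameter $\tau$. First I would observe that for a \emph{fixed} $\tau>0$, the event $\{\max_{0\le k\le T}(-M_k)\ge \lambda\}$ is contained in the event that $-\tau M_k \ge \tau\lambda$ for some $k\le T$, i.e. that $\Exp{-\tau M_k}$ exceeds $\Exp{\tau\lambda}$ for some $k\le T$. The difficulty is that $S_k$ carries the extra factor $\Exp{-\tfrac12\tau^2 V_k}$, so $\Exp{-\tau M_k} = S_k \,\Exp{\tfrac12\tau^2 V_k}$, and $V_k$ is not deterministic-bounded along every path by $V_T$ only because... in fact $V_k$ \emph{is} deterministic and nondecreasing in $k$ by \eqref{eqn:V_k}, hence $V_k \le V_T$ for all $k\le T$. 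This monotonicity is the key structural fact: it lets me write, on the event in question,
\begin{equation*}
S_k = \Exp{-\tau M_k}\Exp{-\tfrac12\tau^2 V_k} \ge \Exp{\tau\lambda}\Exp{-\tfrac12\tau^2 V_T},
\end{equation*}
so that $\{\max_{0\le k\le T}(-M_k)\ge\lambda\} \subseteq \{\max_{0\le k\le T} S_k \ge \Exp{\tau\lambda - \tfrac12\tau^2 V_T}\}$.

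Next I would apply Doob's inequality \eqref{eqn:DoobMartingale} to the nonnegative supermartingale $(S_k)$ with threshold $\Exp{\tau\lambda - \tfrac12\tau^2 V_T}$, using $\E{S_0} = \Exp{-\tau M_0 - \tfrac12\tau^2 V_0} = \Exp{0} = 1$ since $M_0 = V_0 = 0$. This gives, for every $\tau>0$,
\begin{equation*}
\Prob{\max_{0\le k\le T}(-M_k)\ge\lambda} \le \Exp{-\tau\lambda + \tfrac12\tau^2 V_T}.
\end{equation*}
Finally I would optimize the right-hand side over $\tau>0$: the exponent $\tfrac12 V_T\,\tau^2 - \lambda\tau$ is a convex quadratic in $\tau$ minimized at $\tau^\star = \lambda/V_T$ (which is positive since $\lambda>0$ and $V_T\ge 0$), yielding exponent value $-\lambda^2/(2V_T)$ and hence the claimed bound \eqref{eqn:Sk_bound}. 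I should also dispose of the degenerate case $V_T = 0$ (all variance proxies vanish, so $M_k\equiv 0$ a.s. and both sides are consistent with the convention $\Exp{-\lambda^2/0} = 0$), and note that $\tau$ may be chosen freely because Lemma~\ref{lemma:supermartingale} holds for every real $\tau$.

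The main obstacle is largely bookkeeping rather than conceptual: one must be careful that the reduction from $\max_k(-M_k)$ to $\max_k S_k$ goes in the correct direction of inequality (it does, precisely because $V_k$ is deterministic and nondecreasing, so replacing $V_k$ by $V_T$ only shrinks $S_k$), and that the supremum over the stochastic-process path and the optimization over the deterministic parameter $\tau$ are not interchanged illegitimately — the bound $\Exp{-\tau\lambda+\tfrac12\tau^2 V_T}$ holds for each fixed $\tau$ before the minimization, so taking the infimum over $\tau$ afterward is valid. No integrability or measurability subtleties arise beyond those already established in Lemma~\ref{lemma:supermartingale}.
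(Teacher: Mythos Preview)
Your proposal is correct and follows essentially the same approach as the paper: both exploit the deterministic monotonicity $V_k\le V_T$ to obtain the event inclusion $\{\max_{0\le k\le T}(-M_k)\ge\lambda\}\subseteq\{\max_{0\le k\le T}S_k\ge\Exp{\tau\lambda-\tfrac12\tau^2 V_T}\}$, apply Doob's inequality with $\E{S_0}=1$, and then optimize the free parameter at $\tau=\lambda/V_T$. Your treatment of the degenerate case $V_T=0$ and the remark on not interchanging the path supremum with the optimization over $\tau$ are welcome additions that the paper leaves implicit.
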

\begin{proof}
Given $\nu>0$, $\lambda>0$, we define
\begin{equation*}
\psi(\nu,\lambda)
= 
\Exp{\nu\lambda - \frac{1}{2}\nu^2 V_T}.
\end{equation*}
If $\max_{0\le k\le T}(-M_k) \geq \lambda$, then $\exists k^* \leq T$ such
that $-M_{k^*}\ge\lambda$ and $V_{k^*}\le V_T$. Thus,
\begin{align}
S_{k^*}
&=
\Exp{-\nu M_{k^*}- \frac{1}{2}\nu^2 V_{k^*}}
\notag \\
&\geq
\Exp{\nu\lambda - \frac{1}{2}\nu^2 V_T}
= \psi(\nu,\lambda). \notag
\end{align}
Hence, we have
\begin{equation*}
\left\{
\max_{0\le k\leq T}(-M_k) \geq \lambda
\right\}
\subseteq
\left\{
\max_{0\leq k\leq T}S_k \geq \psi(\nu,\lambda)
\right\}.
\end{equation*}
According to Lemma~\ref{lemma:supermartingale}, the process $(S_k)_{k\geq0}$ given by \eqref{eqn:S_k} is a nonnegative supermartingale, and thus applying Lemma~\ref{lemma:Doob} yields
\begin{align}
\Prob{\max_{0\le k\le T}(-M_k) \ge \lambda} \notag
&\leq
\Prob{\max_{0\le k\le T}S_k \ge \psi(\nu,\lambda) }
\notag \\
&\leq
\frac{\E{S_0}}{\psi(\nu,\lambda)}
\notag \\
&= 
\Exp{-\nu\lambda + \frac{1}{2}\nu^2 V_T},
\label{eqn:Sk_bound_derive}
\end{align}
which holds for any $\nu>0$. Without loss of generality, taking
$\nu = \lambda/V_T$ gives $-\nu\lambda 
+ 
\frac{1}{2}\nu^2 V_T
= 
-\frac{\lambda^2}{V_T}
+ 
\frac{1}{2}\frac{\lambda^2}{V_T}
= 
-\frac{\lambda^2}{2V_T}$. Then, \eqref{eqn:Sk_bound_derive} becomes \eqref{eqn:Sk_bound}.
\end{proof}

Finally, we propose the theoretical upper bound on $P_{\text{\rm exit}}(T,x_0)$ as follows.
\begin{theorem}
If $h$ is an SEA-SCBF per {\rm Definition~\ref{def:seaCBF}}, \eqref{eqn:SEA-SCBF} holds for all $k \leq T$ whenever $\widetilde{Y}_k \geq 0$, and $\Delta_{k+1}$ and $\delta_k$ are sub-Gaussian random variables, conditionally on $\mathcal{F}_k$, with proxies $\bar{\sigma}_{k+1}^2$ and $\bar{\tau}_k^2$, respectively, then for any $T \in \mathbb{N}$ and any $0 < \eta < \alpha^T \widetilde{Y}_0$,
\begin{equation}
P_{\text{\rm exit}}(T,x_0) 
\leq 
\Exp{-\frac{(\widetilde{Y}_0 - \alpha^{-T}\eta)^2}{2V_T}} 
+ \sum_{k=0}^T \Exp{-\frac{\eta^2}{2\bar{\tau}_k^2}}.
\label{eqn:P_bound}
\end{equation}
where $V_T 
= \sum_{i=0}^{T-1} \bar{\sigma}_{i+1}^2\,\alpha^{-2(i+1)}$.
\label{theorem:certificate}
\end{theorem}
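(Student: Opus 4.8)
\emph{Proof proposal.} The plan is to split the exit event into two contributions: one governed by the posterior-predicted barrier values $\widetilde{Y}_k$ (which I will control with the martingale $M_k$ and its associated supermartingale through Lemma~\ref{lemma:Sk_bound}), and one governed by the discrepancy $\delta_k = Y_k - \widetilde{Y}_k$ between the true and predicted values. Fix $\eta \in (0,\alpha^T\widetilde{Y}_0)$. Since $Y_k = \widetilde{Y}_k + \delta_k$, the event $\{Y_k < 0\}$ forces either $\widetilde{Y}_k < \eta$ or $\delta_k < -\eta$, so
\begin{equation*}
\Big\{\min_{0\leq k\leq T}Y_k < 0\Big\}
\subseteq
\Big\{\min_{0\leq k\leq T}\widetilde{Y}_k < \eta\Big\}
\cup
\bigcup_{k=0}^{T}\{\delta_k < -\eta\}.
\end{equation*}
A union bound then gives $P_{\text{\rm exit}}(T,x_0) \leq \Prob{\min_{0\leq k\leq T}\widetilde{Y}_k < \eta} + \sum_{k=0}^{T}\Prob{\delta_k < -\eta}$. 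The second sum is handled directly: by the tower property $\EC{\delta_k}{\mathcal{F}_k}=0$, and since $\delta_k$ is sub-Gaussian with proxy $\bar{\tau}_k^2$ conditionally on $\mathcal{F}_k$, applying Lemma~\ref{lemma:Markov} to $\Exp{-t\delta_k}$ together with Definition~\ref{def:sub-Gaussian} and optimizing over $t>0$ yields the deterministic bound $\Prob{\delta_k < -\eta \mid \mathcal{F}_k} \leq \Exp{-\eta^2/(2\bar{\tau}_k^2)}$, hence $\Prob{\delta_k < -\eta} \leq \Exp{-\eta^2/(2\bar{\tau}_k^2)}$ after taking expectations.

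The core step is bounding $\Prob{\min_{0\leq k\leq T}\widetilde{Y}_k < \eta}$. From \eqref{eqn:SEA-SCBF}, together with $\beta_k\geq 0$ and $\VarC{h(x_{k+1})}{\mathcal{F}_k}\geq 0$, I obtain $\EC{Y_{k+1}}{\mathcal{F}_k} \geq \alpha\widetilde{Y}_k$ whenever $\widetilde{Y}_k\geq 0$. Using $\widetilde{Y}_{k+1} = \Delta_{k+1} + \EC{Y_{k+1}}{\mathcal{F}_k}$ and dividing by $\alpha^{k+1}$, and recalling $M_{k+1} = M_k + \alpha^{-(k+1)}\Delta_{k+1}$, this rearranges to $\alpha^{-(k+1)}\widetilde{Y}_{k+1} - M_{k+1} \geq \alpha^{-k}\widetilde{Y}_k - M_k$; that is, $W_k \coloneqq \alpha^{-k}\widetilde{Y}_k - M_k$ is nondecreasing on any time window where $\widetilde{Y}_k\geq 0$. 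Now on the event $\{\min_{0\leq k\leq T}\widetilde{Y}_k < \eta\}$, let $k^\ast$ be the first index with $\widetilde{Y}_{k^\ast} < \eta$; for all $j < k^\ast$ we have $\widetilde{Y}_j \geq \eta > 0$, so $W_{k^\ast} \geq W_0 = \widetilde{Y}_0$, i.e. $-M_{k^\ast} \geq \widetilde{Y}_0 - \alpha^{-k^\ast}\widetilde{Y}_{k^\ast}$. Since $\widetilde{Y}_{k^\ast} < \eta$, $\eta>0$, $\alpha\in(0,1]$, and $k^\ast\leq T$, we have $\alpha^{-k^\ast}\widetilde{Y}_{k^\ast} < \alpha^{-k^\ast}\eta \leq \alpha^{-T}\eta$, hence $-M_{k^\ast} > \widetilde{Y}_0 - \alpha^{-T}\eta \eqqcolon \lambda$, which is strictly positive precisely because $\eta < \alpha^T\widetilde{Y}_0$. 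Therefore $\{\min_{0\leq k\leq T}\widetilde{Y}_k < \eta\} \subseteq \{\max_{0\leq k\leq T}(-M_k) \geq \lambda\}$, and Lemma~\ref{lemma:Sk_bound} gives $\Prob{\min_{0\leq k\leq T}\widetilde{Y}_k < \eta} \leq \Exp{-\lambda^2/(2V_T)}$. Substituting $\lambda = \widetilde{Y}_0 - \alpha^{-T}\eta$ and combining with the bound on the $\delta_k$ terms yields \eqref{eqn:P_bound}.

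I expect the main obstacle to be the first-hitting-time bookkeeping in the core step: the SEA-SCBF recursion may only be invoked on the (random) window $\{j < k^\ast\}$ where $\widetilde{Y}_j\geq 0$ is guaranteed, and one must telescope $W_{k^\ast}\geq W_0$ carefully there, and then track the sign of $\alpha^{-k^\ast}\widetilde{Y}_{k^\ast}$ so that the resulting threshold $\lambda$ is both positive and uniform in $T$. The remaining ingredients — the deterministic event decomposition, the conditional sub-Gaussian tail estimate via Lemma~\ref{lemma:Markov}, and the application of Lemma~\ref{lemma:Sk_bound} — are routine.
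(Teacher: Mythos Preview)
Your proposal is correct and follows essentially the same route as the paper: the same event decomposition $\{Y_k<0\}\subseteq\{\widetilde{Y}_k<\eta\}\cup\{\delta_k<-\eta\}$, the same telescoping of $\alpha^{-k}\widetilde{Y}_k - M_k$ via \eqref{eqn:SEA-SCBF} to reduce the first term to Lemma~\ref{lemma:Sk_bound}, and the same Chernoff-type tail bound for $\delta_k$. Your explicit first-hitting-time bookkeeping (invoking \eqref{eqn:SEA-SCBF} only on $\{j<k^\ast\}$ where $\widetilde{Y}_j\geq\eta>0$ is guaranteed) is in fact slightly more careful than the paper's exposition, which telescopes as if the recursion held for all $k$ without isolating the first violation time.
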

\begin{proof}
We relate the events such that $Y_k<0$ to the events
related to $\widetilde{Y}_k$ and  $\delta_k$. Given $\eta>0$, for each $k \geq 0$, we have
\begin{equation}
\left\{Y_k<0\right\}
= 
\left\{\widetilde{Y}_k 
+
\delta_k < 0
\right\}
\subseteq
\left\{\widetilde{Y}_k\le\eta
\right\}
\cup 
\left\{
\delta_k\le -\eta
\right\}.
\notag
\end{equation}
Then, we can obtain
\begin{equation*}
\displaystyle \bigcup_{k=0}^T\{Y_k<0\}
\subseteq
\left(
\displaystyle \bigcup_{k=0}^T
\left\{
\widetilde{Y}_k\leq\eta
\right\}
\right)
\displaystyle \bigcup
\left(
\displaystyle \bigcup_{k=0}^T\{\delta_k\leq-\eta\}
\right).
\end{equation*}
Hence, we have
\begin{equation}
P_{\text{\rm exit}}(T,x_0)
\leq
\Prob{\min_{0\le k\le T}\widetilde{Y}_k\le\eta}
+
\Prob{\min_{0\le k\le T}\delta_k\le-\eta}.
\label{eqn:exitsplit}
\end{equation}
Now we can bound the two terms on the right hand side of \eqref{eqn:exitsplit} separately. 

To begin with, we define $\widetilde{S}_k = \alpha^{-k}\,\widetilde{Y}_k$, and denote $\mu_{h_{k+1}}^- 
= 
\EC{Y_{k+1}}{\mathcal{F}_k}$ and $(\sigma_{h_{k+1}}^-)^2 
= 
\VarC{h(x_{k+1})}{\mathcal{F}_k}$. Based on the decomposition $\widetilde{Y}_{k+1} = \mu_{h_{k+1}}^- + \Delta_{k+1}$ and \eqref{eqn:M_k_2}, we can obtain
\begin{align}
\widetilde{S}_{k+1}
&= 
\alpha^{-(k+1)}\,
\mu_{h_{k+1}}^-
+ 
\alpha^{-(k+1)}
\Delta_{k+1} \notag \\ 
&= 
\alpha^{-(k+1)}
\mu_{h_{k+1}}^-
+ 
(M_{k+1}-M_k).
\label{eqn:eqn:M_k_3}
\end{align}
Combining \eqref{eqn:eqn:M_k_3} with \eqref{eqn:SEA-SCBF} yields
\begin{equation*}
\widetilde{S}_{k+1}
\geq
\widetilde{S}_k
+ 
\alpha^{-(k+1)}
\beta_k
\sigma_{h_{k+1}}^-
+ 
(M_{k+1}-M_k), 
\end{equation*}
and thus we can obtain
\begin{equation}
\widetilde{S}_{k+1} - M_{k+1}
\geq
\widetilde{S}_k - M_k
+ 
\alpha^{-(k+1)}
\beta_k
\sigma_{h_{k+1}}^-.
\label{eqn:M_k_4}
\end{equation}
Based on \eqref{eqn:M_k_4}, we can get
\begin{equation}
\widetilde{S}_k
\geq
\widetilde{S}_0
+ M_k
+ \sum_{i=0}^{k-1}\alpha^{-(i+1)}
\beta_k
\sigma_{h_{k+1}}^-.
\notag
\end{equation}
Since $\sum_{i=0}^{k-1}\alpha^{-(i+1)}
\beta_k
\sigma_{h_{k+1}}^- \geq 0$, then
\begin{equation}
\widetilde{S}_k 
\geq
\widetilde{S}_0 + M_k,
\quad \forall k\ge 0.
\label{eqn:M_k_simple}
\end{equation}
Suppose for some $\bar{k} \leq T$, $\widetilde{Y}_{\bar{k}}\le\eta$, then we have
\begin{equation}
\widetilde{S}_{\bar{k}} = \alpha^{-\bar{k}}\widetilde{Y}_{\bar{k}}
\leq \alpha^{-\bar{k}}\eta
\leq \alpha^{-T}\eta.
\notag
\end{equation}
Based on \eqref{eqn:M_k_simple}, we have $
\widetilde{S}_0 + M_{\bar{k}} \leq \widetilde{S}_{\bar{k}} \leq \alpha^{-T}\eta$, so
$M_{\bar{k}} \leq \alpha^{-T}\eta - \widetilde{S}_0$. Hence, we can obtain
\begin{align*}
\left\{
\min_{0\le k\leq T}\widetilde{Y}_k\leq\eta\right\}
&\subseteq
\left\{
\min_{0\leq k\leq T}M_k\leq\alpha^{-T}\eta - \widetilde{S}_0
\right\} \notag \\
&=
\left\{
\max_{0\leq k\leq T}(-M_k)\geq \widetilde{S}_0 - \alpha^{-T}\eta
\right\}.
\end{align*}

Applying Lemma~\ref{lemma:Sk_bound} with
$\lambda = \widetilde{S}_0 - \alpha^{-T}\eta >0$ yields
\begin{equation}
\Prob{
\min_{0\le k\le T}\widetilde{Y}_k\le\eta
}
\leq
\Exp{-\frac{(\widetilde{S}_0 - \alpha^{-T}\eta)^2}{2 V_T}},
\label{eqn:bound1}
\end{equation}
where $\widetilde{S}_0 = \alpha^{0}\widetilde{Y}_0 = \widetilde{Y}_0$.

Next, since $\delta_k$ is assumed to follow sub-Gaussian distribution conditionally on $\mathcal{F}_k$, there exists a deterministic sequence $(\bar{\tau}_k^2)_{k\ge 0}$ such that
for any $\varsigma \in \mathbb{R}$,
\begin{equation*}
\EC{\Exp{\varsigma\delta_k}}{\mathcal{F}_k}
\le
\Exp{
\frac{1}{2}\varsigma^2\bar\tau_k^2}.
\end{equation*}
Hence, based on Lemma~\ref{lemma:Markov}, given any $k \geq 0$, $\eta>0$, and $\tau>0$, we have
\begin{align*}
\Prob{\delta_k\le -\eta\,\middle|\,\mathcal{F}_k}
&= 
\Prob{\Exp{-\tau\delta_k}\geq \Exp{\tau\eta}
\,\middle|\,
\mathcal{F}_k} \notag \\
&\leq
\Exp{-\tau\eta}
\EC{\Exp{-\tau\delta_k}}{\mathcal{F}_k} \notag \\
&\leq
\Exp{-\tau\eta + \frac{1}{2}\tau^2\bar\tau_k^2}.
\end{align*}
Taking $\tau = \eta/\bar\tau_k^2$ gives $-\tau\eta + \frac{1}{2}\tau^2\bar\tau_k^2
  = -\frac{\eta^2}{\bar\tau_k^2}
    + \frac12\frac{\eta^2}{\bar\tau_k^2}
  = -\frac{\eta^2}{2\bar\tau_k^2}$. Thus, we can obtain
\begin{equation}
\Prob{\delta_k\le -\eta\,\middle|\,\mathcal{F}_k}
\leq
\Exp{-\frac{\eta^2}{2\bar\tau_k^2}}.
\label{eqn:deltabound1}
\end{equation}
Taking expectation on both sides of \eqref{eqn:deltabound1} and using Lemma~\ref{lemma:tower} yields
\begin{align*}
\Prob{\delta_k \leq - \eta} 
&= \E{\Prob{\delta_k\leq -\eta\,\middle|\,\mathcal{F}_k}} \notag \\
& \leq
\E{\Exp{-\frac{\eta^2}{2\bar{\tau}_k^2}}} 
= 
\Exp{-\frac{\eta^2}{2\bar{\tau}_k^2}}.
\end{align*}
Then, we can obtain
\begin{align}
\Prob{\min_{0\le k\le T}\delta_k\le -\eta}
&= 
\Prob{
\bigcup_{k=0}^T 
\left\{
\delta_k \leq -\eta
\right\}
} \notag \\
&\leq
\sum_{k=0}^T
\Prob{\delta_k\le -\eta}
\notag \\
&\leq
\sum_{k=0}^T \Exp{-\frac{\eta^2}{2\bar{\tau}_k^2}}.
\label{eqn:prob_delta}
\end{align}
Therefore, combining \eqref{eqn:exitsplit}, \eqref{eqn:bound1}, and \eqref{eqn:prob_delta} concludes the proof of Theorem~\ref{theorem:certificate}.
\end{proof}
Theorem~\ref{theorem:certificate} provides an offline-computable bound of finite-time stochastic safety incorporating state estimation uncertainty. Once the proxies $\bar{\sigma}_{k+1}^2$ and $\bar{\tau}_k^2$ are specified, the right-hand side of \eqref{eqn:P_bound} can be computed without access to actual trajectories (i.e., without running experiments).

A central difficulty of bounding the finite-time exit probability \eqref{eqn:exit-prob} is that the true barrier value $h(x_k)$ is adapted to the state filtration $\mathscr{F}_k$, which is unknown, whereas the controller is adapted to the observation filtration $\mathcal{F}_k$. Our analysis thus bounds the unobservable exit event $h(x_k)<0$ by separating it into two risk sources. The first term on the right-hand side of \eqref{eqn:P_bound} quantifies the probability that the posterior estimate of the barrier value drops below a chosen threshold $\eta$. The intuition is that a relatively large initial margin $\widetilde{Y}_0$ exponentially suppresses the risk, while large $V_T$ accumulates the effect of innovations of the posterior barrier estimate $\Delta_{k+1}$ over time horizon, reflecting that long horizons and high predictive uncertainty inevitably loosen finite-time certificates. The second term quantifies the probability that the estimation error $\delta_k$ drives the true barrier value negative even when $\widetilde{Y}_k$ appears safe. The tunable threshold $\eta$ trades off these two risk sources, and it can be obtained by minimizing the scalar objective function in $\eta$, 
\begin{equation}
\eta^* 
= 
\argmin_{0<\eta<\alpha^T\widetilde{Y}_0} f(\eta),
\label{eqn:eta_opt}
\end{equation}
where 
$
f(\eta) = \Exp{-\frac{(\widetilde{Y}_0 - \alpha^{-T}\eta)^2}{2V_T}} 
+ \sum_{k=0}^T \Exp{-\frac{\eta^2}{2\bar{\tau}_k^2}},
$
which can be solved efficiently by a line search.

As such, we propose the safety-critical control framework
\vspace{5pt}
\hrule height 0.8pt
\vspace{7pt}
\noindent\textbf{SEA-SCBF Control Synthesis:}
\begin{equation}
\begin{aligned}
& \argmin_{{u}_k \in \mathcal{U}} 
& & \|u_k - \bar{u}_k\|^2 \\
& \quad\;
\textnormal{s.t.}
& &  
\mu_{h_{k+1}}^-\!(u_k) 
- 
\beta_k \, \sigma_{h_{k+1}}^-\!(u_k)
\geq
\alpha \mu_{h_k},
\end{aligned}
\label{eqn:SE-CBF_Controller}
\end{equation}
in which
\begin{align*}
\mu_{h_{k+1}}^-\!(u_k) 
&= 
\EC{h(x_{k+1})}{\mathcal{F}_k},\\
\sigma_{h_{k+1}}^-\!(u_k)
&= 
\sqrt{\VarC{h(x_{k+1})}{\mathcal{F}_k}},\\
\mu_{h_k} 
&= 
\EC{h(x_{k})}{\mathcal{F}_k}.
\end{align*}
\hrule height 0.8pt
\vspace{7pt}
\noindent The online control synthesis framework \eqref{eqn:SE-CBF_Controller} is adaptive to the level of uncertainty arising from both process and measurement noise, supports real-time computation, and 
enables probabilistic reasoning about finite-time safety.

Note that the probabilistic bound \eqref{eqn:P_bound} is derived under mild structural requirements, i.e., sub-Gaussian assumptions on $\Delta_{k+1}$ and $\delta_{k+1}$. It does not rely on, for example, an upper-bounded barrier function $h$, a polynomial barrier function $h$, bounded noise terms $\varepsilon_k$, $\epsilon_k$, or bounded predictable quadratic variations. At this level of generality, characterizing the tightest possible bound is beyond the scope of this work. If stronger information is assumed, then the same proof pipeline accommodates alternative martingale tools (see, e.g., \cite{freedman1975tail}), potentially yielding tighter bounds. Note that such refinements affect only the certification module, while the proposed control synthesis framework \eqref{eqn:SE-CBF_Controller} remains unchanged. In addition, 
obtaining the proxies $\bar{\sigma}_{k+1}^2$ and $\bar{\tau}_k^2$  falls under system identification or distribution learning, and any model-based or data-driven methods can be plugged in. Therefore, in Section~\ref{Section:LTI_AffineCBF}, we  focus on demonstrating that the proxies can be calculated in closed-form in a special setting, and the resulting probabilistic bound exhibits the correct qualitative behavior without positioning it as a trajectory-level predictor of exact exit frequencies.

\section{SEA-CBFs for Linear Systems} \label{Section:LTI}
In this section, first, we derive the closed-form expressions of \eqref{eqn:SE-CBF_Controller}, $\bar{\sigma}_{k+1}^2$, and $\bar{\tau}_k^2$, for linear system with affine CBF. Then, we extend the methods in \cite{cosner2023robust,cosner2024bounding,mestres2025probabilistic} to the formulations under state estimation uncertainty. Finally, based on the results, we propose a new framework for motion planning (or trajectory generation), which provides a complementary perspective to sampling-based methods (see, e.g., \cite{lavalle1998rapidly,kavraki2002probabilistic}) and diffusion-model-based methods (see, e.g., \cite{janner2022planning}).

\subsection{Linear System with Affine CBF} \label{Section:LTI_AffineCBF}
Consider the linear stochastic system
\begin{equation}
\begin{cases}
p_{k+1} = Ap_k + B u_k + \varepsilon_k, \\
z_k = H p_k + \epsilon_k,
\end{cases}
\label{eqn:LTI}
\end{equation}
where $p_k \in \mathbb{R}^n$, $z_k \in \mathbb{R}^d$, $A\in\R^{n\times n}$, $B\in\R^{n\times m}$, $H\in\R^{d\times n}$, $\varepsilon_k \sim \mathcal{N}(0, \Sigma_{\varepsilon_k})$, $\epsilon_k \sim \mathcal{N}(0, \Sigma_{\epsilon_k})$, 
$\Sigma_{\varepsilon_k} \in \mathbb{S}^n_+$, and $\Sigma_{\epsilon_k} \in \mathbb{S}^d_+$.

At each time step $k$, Kalman filter \cite{kalman1960new} gives $\mu_k = \EC{p_k}{\mathcal{F}_k}$ and $\Sigma_k = \CovC{p_k}{\mathcal{F}_k}$. If the SEA-SCBF is affine in state, i.e., 
\begin{equation}
h(p) = c^\top p - b,
\label{eqn:affineCBF}
\end{equation}
where $c \in \R^n$, $b \in \R$, we can obtain
\begin{align*}
\EC{h(p_{k})}{\mathcal{F}_k} = c^\top\mu_k-b.
\end{align*}
\begin{align*}
 \EC{h(p_{k+1})}{\mathcal{F}_k} 
&= 
\EC{c^\top p_{k+1} - b}{\mathcal{F}_k} \\
&= c^\top (A \mu_k + B u_k) - b,
\end{align*}
\begin{align*}
\VarC{h(p_{k+1})}{\mathcal{F}_k} 
&= \VarC{c^\top p_{k+1}}{\mathcal{F}_k}\\
&= c^\top 
\left(
A \Sigma_k A^\top + \Sigma_{\varepsilon_k}
\right) c.
\end{align*}
Then, the constraint in the SEA-SCBF control synthesis framework \eqref{eqn:SE-CBF_Controller} becomes
\begin{equation*}
c^\top B\, u_k
\geq
m(\mu_k)
+
\rho(\Sigma_k, \Sigma_{\varepsilon_k}),
\end{equation*}
in which
\begin{equation}
m(\mu_k) 
= 
(1-\alpha)\,b 
- 
c^\top(A-\alpha I)\,\mu_k, 
\label{eqn:m(mu)}
\end{equation}
\begin{equation}
\rho(\Sigma_k, \Sigma_{\varepsilon_k})
= 
\beta_k
\sqrt{
c^\top
\left(
A\Sigma_k A^\top + \Sigma_{\varepsilon_k}
\right)
c}.
\label{eqn:rho(sig)}
\end{equation}
As such, at each time step, given a nominal control input $\bar{u}_k \in \R^m$, the stochastic safety-critical controller under state estimation uncertainty can be synthesized via
\begin{equation}
\begin{aligned}
& \argmin_{{u}_k \in \mathcal{U}} 
& & \|u_k - \bar{u}_k\|^2 \\
& \quad\;
\textnormal{s.t.}
& &  
c^\top B \,u_k
\geq
m(\mu_k)
+
\rho(\Sigma_k, \Sigma_{\varepsilon_k}),
\end{aligned}
\label{eqn:SEASCBFQP}
\end{equation}
which is a quadratic program (QP).

Next, we derive the closed-form expressions for $\bar{\sigma}_{k+1}^2$ and $\bar{\tau}_k^2$. Since the posterior estimate of the state $p_k$ is Gaussian with mean $\mu_k$ and covariance $\Sigma_k$, then ${Y}_k$ conditioned on $\mathcal{F}_k$ is also a Gaussian (after linear transformation) with mean $c^\top \mu_k - b$ and covariance $c^\top \Sigma_k c$. Hence, $\delta_k = c^\top (p_k - \mu_k)$, and thus
\begin{equation}
\bar{\tau}_k^2 = c^\top \Sigma_k c.
\label{eqn:proxyforLTI}
\end{equation}
Likewise, since $\widetilde{Y}_{k+1} = c^\top \mu_{k+1} - b$, then 
\begin{equation*}
\Delta_{k+1} = c^\top K_{k+1}\,r_{k+1},
\end{equation*}
in which the Kalman gain
\begin{equation*}
K_{k+1} = \Sigma_{k+1}^- H^\top (H \Sigma_{k+1}^- H^\top + \Sigma_{\epsilon_{k+1}})^{-1},
\end{equation*}
where $\Sigma_{k+1}^- = A \Sigma_k A^\top + \Sigma_{\varepsilon_k}$ 
and $r_{k+1} 
=
H(p_{k+1} - A \mu_k - B u_k) + \varepsilon_{k+1}$. Since $p_{k+1}$ conditionally on $\mathcal{F}_k$ and $\varepsilon_{k+1}$ are independent, then we can obtain
\begin{equation}
\bar{\sigma}_{k+1}^2 = c^\top K_{k+1} (H \Sigma_{k+1}^- H^\top + \Sigma_{\epsilon_{k+1}}) K_{k+1}^\top c.
\label{eqn:Delta_proxy}
\end{equation}

Now that we have the closed-form expressions for $\bar{\sigma}_{k+1}^2$ and $\bar{\tau}_k^2$ which can be calculated offline via Riccati recursion as seen in \eqref{eqn:proxyforLTI} and \eqref{eqn:Delta_proxy}, as discussed at the end of Section~\ref{Section:StochasticCBF} we demonstrate the behavior of the bound in Theorem~\ref{theorem:certificate} by applying the SEA-SCBF-QP \eqref{eqn:SEASCBFQP} to a linear system with affine barrier function. Specifically, we consider \eqref{eqn:LTI} with $p_k \in \R^2$, $A = I$, $B = I\Delta t$, $H = I$, $\epsilon_k \sim \mathcal{N}(0,0.03^2I)$, $c = [0, 1]^\top$, $b = -0.5$, $\alpha = 1$, and $\beta_k = \beta \Exp{-7 \widetilde{Y}_k}$. To clearly illustrate safety violations in the Monte Carlo (MC) experiments, we set the level of some noise to be relatively higher than the system scale, and adopt an open-loop controller as the nominal control input that drives the state along the positive $x$-axis. Thus we let $\varepsilon_k \sim \mathcal{N}(0,\diag(0.03^2, \sigma_y^2))$ to show the influence of the noise level in $y$-axis on the behavior of $P_{\text{\rm exit}}(T,x_0)$. In addition, $\eta$ is obtained through \eqref{eqn:eta_opt}. 

\begin{figure}[t]
\centering
\subfigure[]{
\begin{minipage}[b]{0.22\textwidth}
\includegraphics[width=1\textwidth]{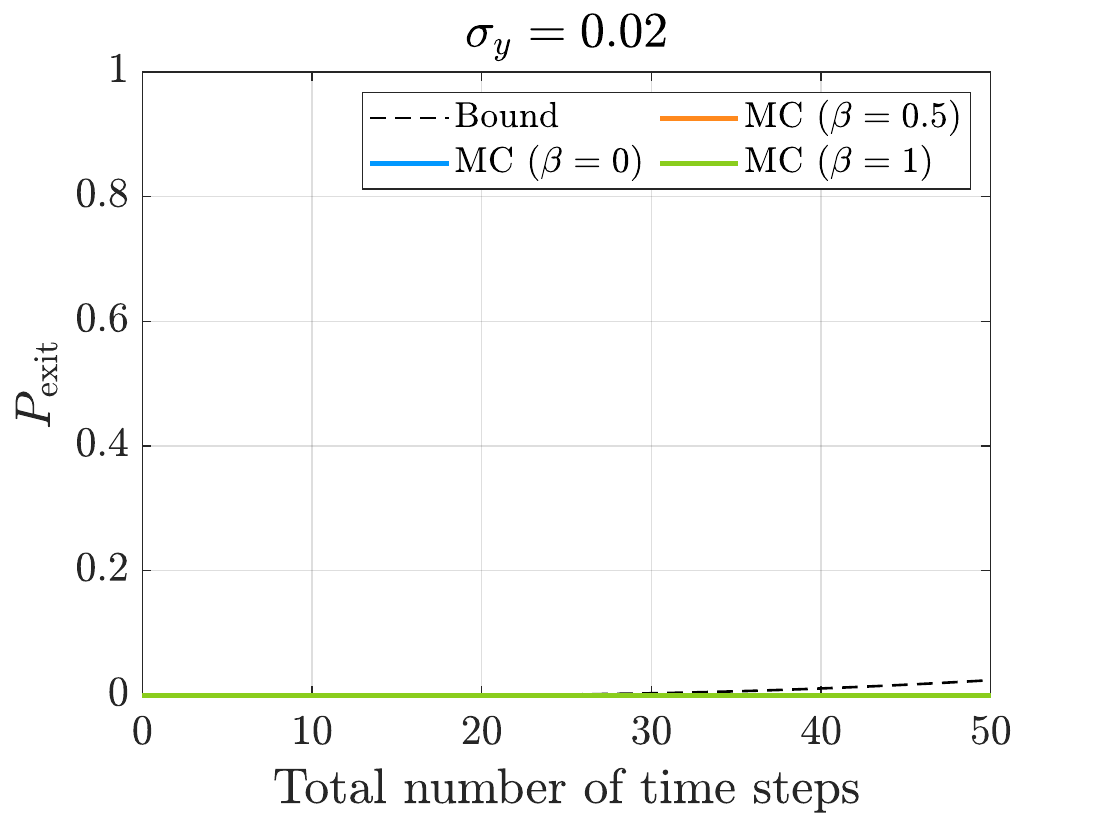}
\end{minipage}
\label{fig:bound_0}
}
\subfigure[]{
\begin{minipage}[b]{0.22\textwidth}
\includegraphics[width=1\textwidth]{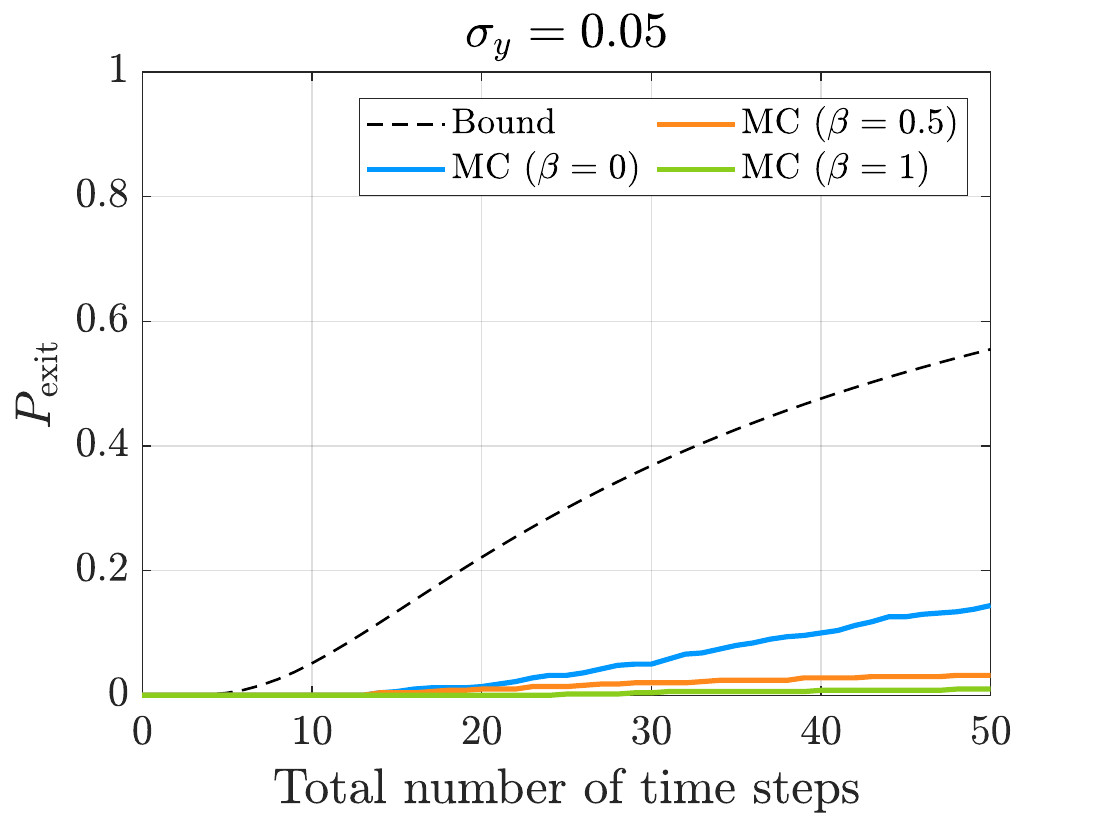}
\end{minipage}
\label{fig:bound_1}
}
\subfigure[]{
\begin{minipage}[b]{0.22\textwidth}
\includegraphics[width=1\textwidth]{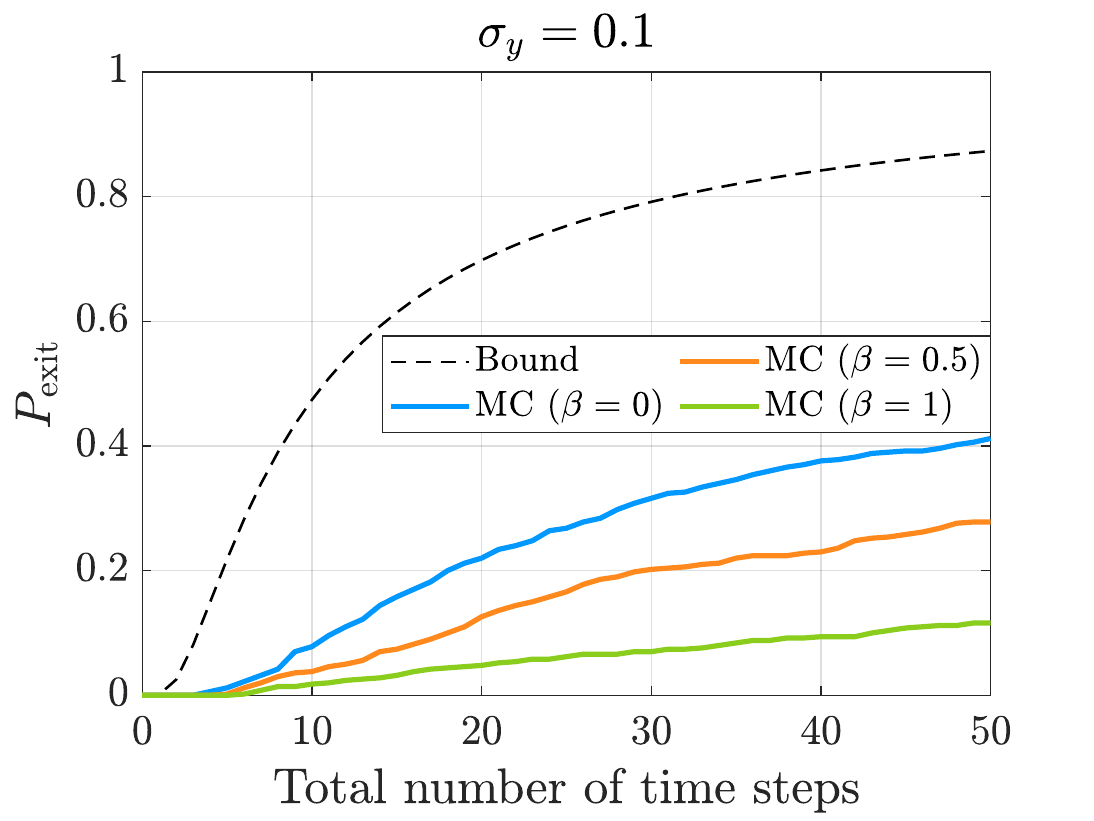}
\end{minipage}
\label{fig:bound_1_1}
}
\subfigure[]{
\begin{minipage}[b]{0.22\textwidth}
\includegraphics[width=1\textwidth]{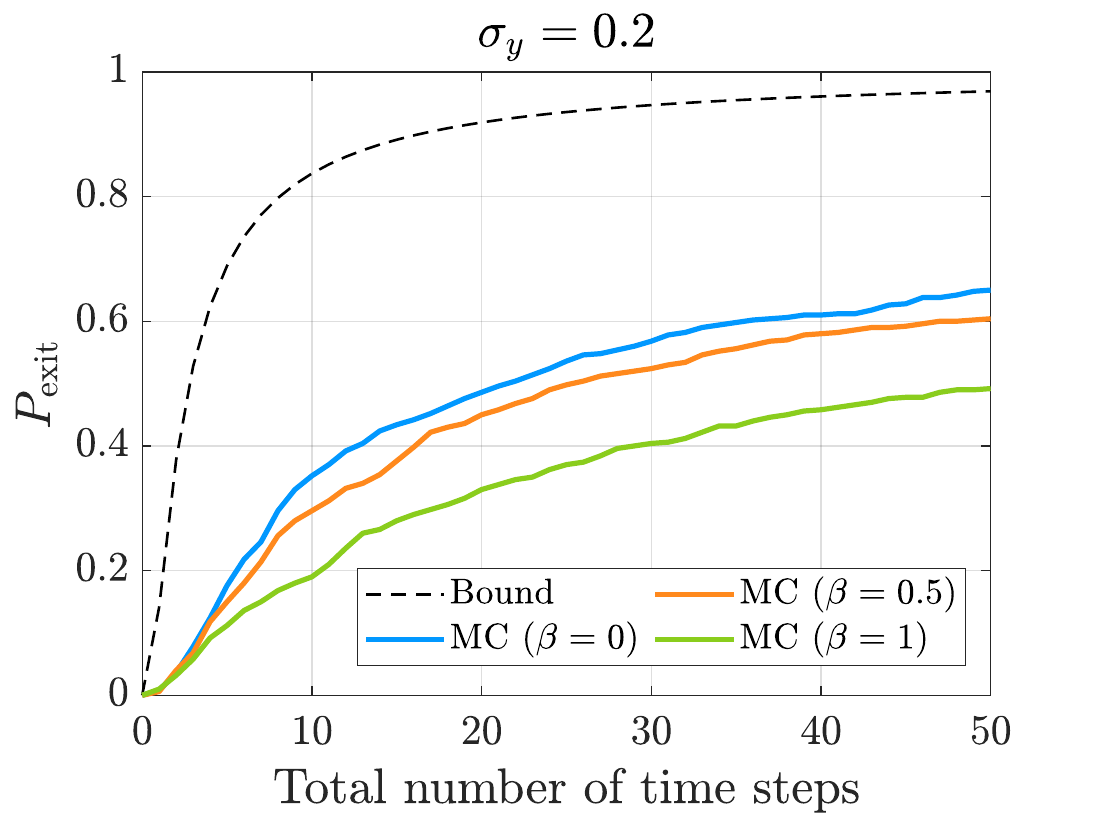}
\end{minipage}
\label{fig:bound_2}
}
\caption{Comparison between the theoretical upper bound on $P_{\text{\rm exit}}(T,x_0)$ per \eqref{eqn:P_bound} and the $T$-step exit frequency per \eqref{eqn:exit-prob} along the total number of steps $T$ under different parameter settings (with 500 MC trials).
}
\label{fig:Pbounds}
\end{figure}

The simulation results (with 500 MC trials) are shown in Fig.~\ref{fig:Pbounds}, from which one can notice that for a fixed total number of time steps $T$, higher noise leads to a higher probability of exit from the safe set. In addition, the theoretical upper bound on $P_{\text{\rm exit}}(T,x_0)$ per \eqref{eqn:P_bound} is always higher than the $T$-step exit frequency per \eqref{eqn:exit-prob}, which verifies Theorem~\ref{theorem:certificate}. Moreover, both of them increase as the total number of time steps $T$ increases. This is because sufficiently large noise is more likely to be sampled over longer time horizons, potentially leading to exit from the safe set. Furthermore, since $\beta_k$ is $\mathcal{F}_k$-measurable, it does not affect the theoretical upper bound on $P_{\text{\rm exit}}(T,x_0)$ whereas it affects the MC experiments. Fig.~\ref{fig:Pbounds} implies that larger $\beta$ results in safer behavior. This is because for linear system with affine barrier function, as seen in \eqref{eqn:rho(sig)}, $\sigma_{h_{k+1}}^-$ does not depend on the control input $u_k$. Thus, given a posterior estimate of the current barrier function value, larger $\beta$ leads to larger $\beta_k$, resulting in larger $\rho$, and therefore the controller $u_k$ becomes more conservative, leading to safer behavior.

\subsection{SCBF-based Motion Planning} \label{Section:motion planning}
Although the affine SEA-SCBF \eqref{eqn:affineCBF} appears simple, Boolean compositions (see, e.g., \cite{glotfelter2020nonsmooth}) allow it to model complex environments. This insight motivates us to propose a novel motion planning framework.

For comparison purposes, we extend the expectation-based discrete-time CBF (ED) \cite{cosner2023robust,cosner2024bounding} and the probabilistic CBF (PCBF) \cite{mestres2025probabilistic} into the formulations under state estimation uncertainty, respectively, and then compare them with the proposed SEA-SCBF-QP \eqref{eqn:SEASCBFQP} for safe motion planning (i.e., collision-free trajectory generation) in a dense environment.

The ED-based constraint is given by \cite{cosner2023robust,cosner2024bounding} 
\begin{equation}
\EC{h(x_{k+1})}{\mathcal{F}_k} 
\geq \alpha  h(x_{k}),
\label{eqn:ED}
\end{equation}
where $\alpha \in (0,1)$. Incorporating state uncertainty and estimation, we extend \eqref{eqn:ED} into 
\begin{equation}
\EC{h(x_{k+1})}{\mathcal{F}_k} 
\geq \alpha  \EC{h(x_{k})}{\mathcal{F}_k},
\label{eqn:SEA-ED}
\end{equation}
where $\alpha \in (0,1]$. For linear system \eqref{eqn:LTI} with affine CBF \eqref{eqn:affineCBF}, \eqref{eqn:SEA-ED} becomes $c^\top B\, u_k
\geq m(\mu_k)$, where $m(\mu_k)$ is given by \eqref{eqn:m(mu)}. Then, the corresponding SEA-ED-QP can be formulated as
\begin{equation}
\begin{aligned}
& \argmin_{{u}_k \in \mathcal{U}} 
& & \|u_k - \bar{u}_k\|^2 \\
& \quad\;
\textnormal{s.t.}
& &  
c^\top B \,u_k
\geq
m(\mu_k),
\end{aligned}
\label{eqn:SEAEDQP}
\end{equation}
in which $m(\mu_k)$ is given by \eqref{eqn:m(mu)}.

The PCBF-based chance constraint is given by \cite{mestres2025probabilistic}
\begin{equation}
\Prob{
h(x_{k+1}) \geq \alpha h(x_k)
}
\geq
\delta,
\label{eqn:PCBF}
\end{equation}
where $\alpha \in [0,1]$, $\delta \in (0,1)$. Incorporating state uncertainty and estimation, we extend \eqref{eqn:PCBF} into
\begin{equation}
\Prob{
h(x_{k+1}) \geq \alpha h(x_k) 
\mid \mathcal{F}_k
}
\geq
\delta,
\label{SEA-PCBF}
\end{equation}
where $\alpha \in [0,1]$, $\delta \in (0.5,1)$. Denote $\Delta h_k = h(x_{k+1}) - \alpha h(x_k)$, for \eqref{eqn:LTI} with \eqref{eqn:affineCBF}, the mean and variance of $\Delta h_k$ conditioned on $\mathcal{F}_k$ can be calculated as
\begin{equation*}
\EC{\Delta h_k}{\mathcal{F}_k} 
=
c^\top B u_k 
+ 
c^\top(A-\alpha I)\mu_k - (1-\alpha)b,
\end{equation*}
and 
\begin{equation*}
\VarC{\Delta h_k}{\mathcal{F}_k} 
=
c^\top
\left(
(A-\alpha I)\Sigma_k(A-\alpha I)^\top + \Sigma_{\varepsilon_k}
\right)c,
\end{equation*}
respectively. Then, \eqref{SEA-PCBF} is equivalent to
\begin{equation}
c^\top B\, u_k
\geq
m(\mu_k)
+
s(\Sigma_k, \Sigma_{\varepsilon_k}), \notag
\end{equation}
in which $m(\mu_k)$ is given by \eqref{eqn:m(mu)}, and 
\begin{equation}
s(\Sigma_k, \Sigma_{\varepsilon_k})
= 
\phi^{-1}(\delta)
\sqrt{
c^\top \!
\left(
(A-\alpha I)\Sigma_k (A-\alpha I)^\top
\!+
\Sigma_{\varepsilon_k}
\right)
c},
\label{eqn:s(sig)}
\end{equation}
where $\phi^{-1}$ is the inverse of the standard Gaussian cumulative distribution function $\phi(q) = \frac{1}{\sqrt{2\pi}}\int_{-\infty}^q \! \Exp{-\frac{1}{2}q^2} \mathrm{d}q$, which is why the lower bound of $\delta$ in \eqref{SEA-PCBF} becomes $0.5$ instead of $0$. Hence, the corresponding SEA-PCBF-QP can be formulated as
\begin{equation}
\begin{aligned}
& \argmin_{{u}_k \in \mathcal{U}} 
& & \|u_k - \bar{u}_k\|^2 \\
& \quad\;
\textnormal{s.t.}
& &  
c^\top B \,u_k
\geq
m(\mu_k)
+
s(\Sigma_k, \Sigma_{\varepsilon_k}).
\end{aligned}
\label{eqn:SEAPCBFQP}
\end{equation}

\begin{table}[t]
\centering
\caption{Comparison of Different Methods for Motion Planning under Different Settings (with 500 MC Trials).}
\label{tab:comparison}
\setlength{\tabcolsep}{3pt}
\renewcommand{\arraystretch}{1.15}
\begin{tabular}{ll l
S[table-format=3.1]
S[table-format=3.1]
S[table-format=3.1]}
\toprule
{Env} & 
{} & 
{Metric} & 
\textbf{SEA-SCBF} & {SEA-ED} & 
{SEA-PCBF} \\
\midrule
\multirow{2}{*}{\text{Accurate}}
 &  & Safety Rate ($\%$) & 99.0 & 69.0 & 95.8 \\
 &  & Goal Reach ($\%$)
 & 100.0 & 100.0 & 12.6 \\
\midrule
\multirow{2}{*}{\text{Inaccurate}}
 &  & Safety Rate ($\%$) & 95.6 & 62.2 & 75.8 \\
 &  & Goal Reach ($\%$)   & 100.0 & 100.0 & 9.6 \\
\bottomrule
\end{tabular}
\end{table}
We now apply the methods \eqref{eqn:SEASCBFQP}, \eqref{eqn:SEAEDQP}, and \eqref{eqn:SEAPCBFQP} for motion planing (trajectory generation) in a corridor with densely distributed obstacles. Specifically, for all three methods, we consider \eqref{eqn:LTI} with  $p_k \in \R^3$, $A = I$, $B = I\Delta t$, $H = I$, $\varepsilon_k \sim \mathcal{N}(0,0.06^2I)$, $\epsilon_k \sim \mathcal{N}(0,0.2^2I)$, $\alpha = 0.9$, $T = 240$. The nominal controller is a go-to-position controller $\bar{u}_k = p_{\text{g}} - \mu_k$ with the goal position $p_{\text{g}} = (12,0,0)^\top$ and the initial position $p_0 = (0,0,0)^\top$. In addition, we let $\phi^{-1}(\delta) = 3.93$ for the SEA-PCBF-QP \eqref{eqn:SEAPCBFQP}, and $\beta = \phi^{-1}(\delta)$ in $\beta_k = \beta \Exp{-7 \widetilde{Y}_k}$ for the SEA-SCBF-QP \eqref{eqn:SEASCBFQP}. We set 136 affine CBFs with $h^{\text{obs}}_{ij}(p) = (c^{\text{obs}}_{ij})^\top p - b^{\text{obs}}_{ij}$, $\forall i \in \{1,\dots,11\}$, $\forall j \in \{1,\dots,12\}$, representing 11 obstacles (regular dodecahedrons), and $h^{\text{wall}}_k(p) = (c^{\text{wall}}_k)^\top p - b^{\text{wall}}_k$, $\forall k \in \{1,2,3,4\}$, representing 4 walls in the $y$ and $z$ directions. The 136 CBFs are composed through the Boolean operator AND (see, e.g., \cite{glotfelter2020nonsmooth}), resulting in $h(p) = \min \{h^{\text{obs}}_{ij}(p), h^{\text{wall}}_k(p)\}$ for all $i \in \{1,\dots,11\}$, $j \in \{1,\dots,12\}$, and $ k \in \{1,2,3,4\}$, so there is only one active CBF at each time step. 
\begin{figure}[t]
\centering
\subfigure[SEA-SCBF]{
\begin{minipage}[b]{0.4\textwidth}
\includegraphics[width=1\textwidth]{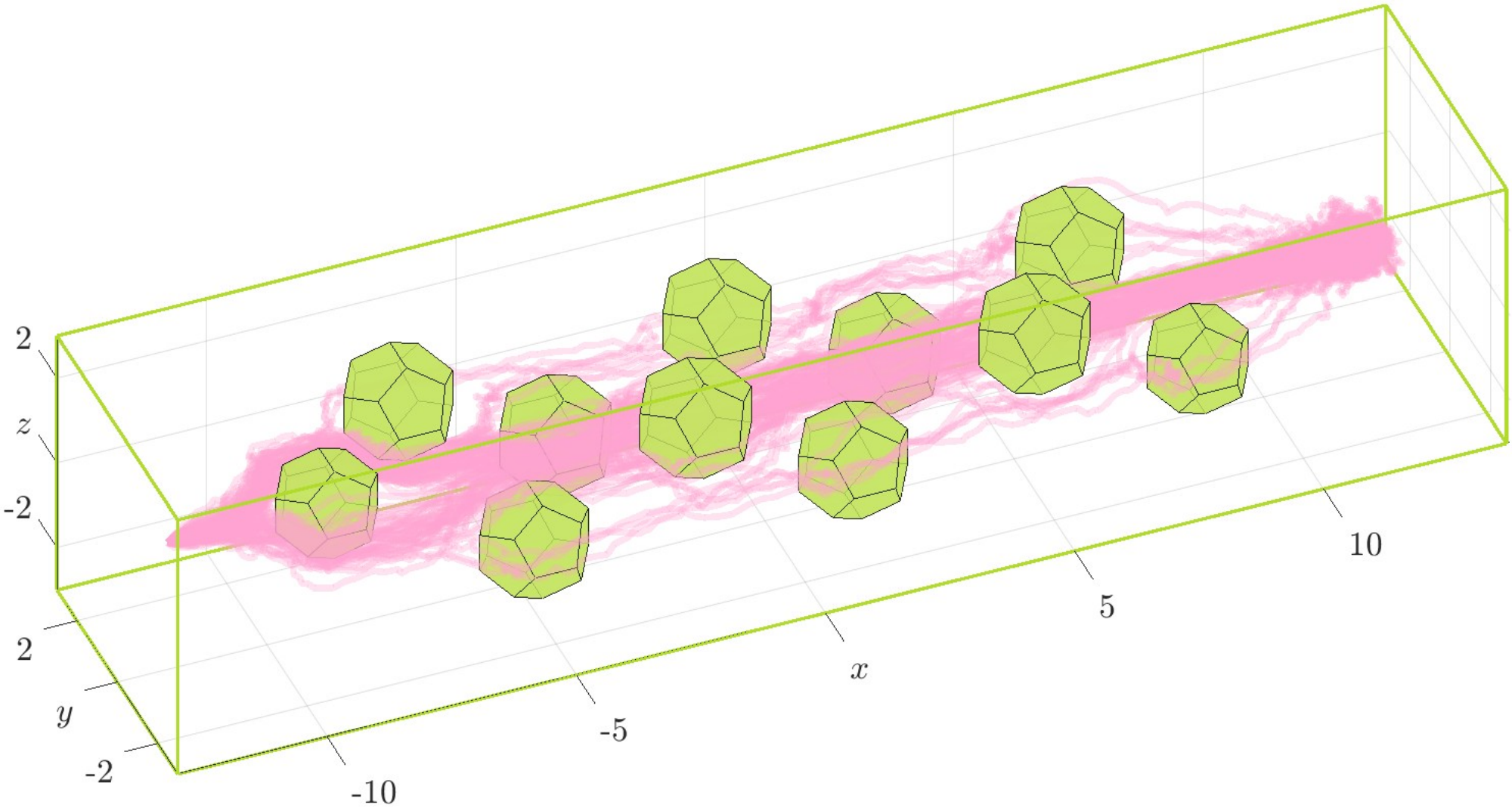}
\end{minipage}
}
\subfigure[SEA-ED]{
\begin{minipage}[b]{0.4\textwidth}
\includegraphics[width=1\textwidth]{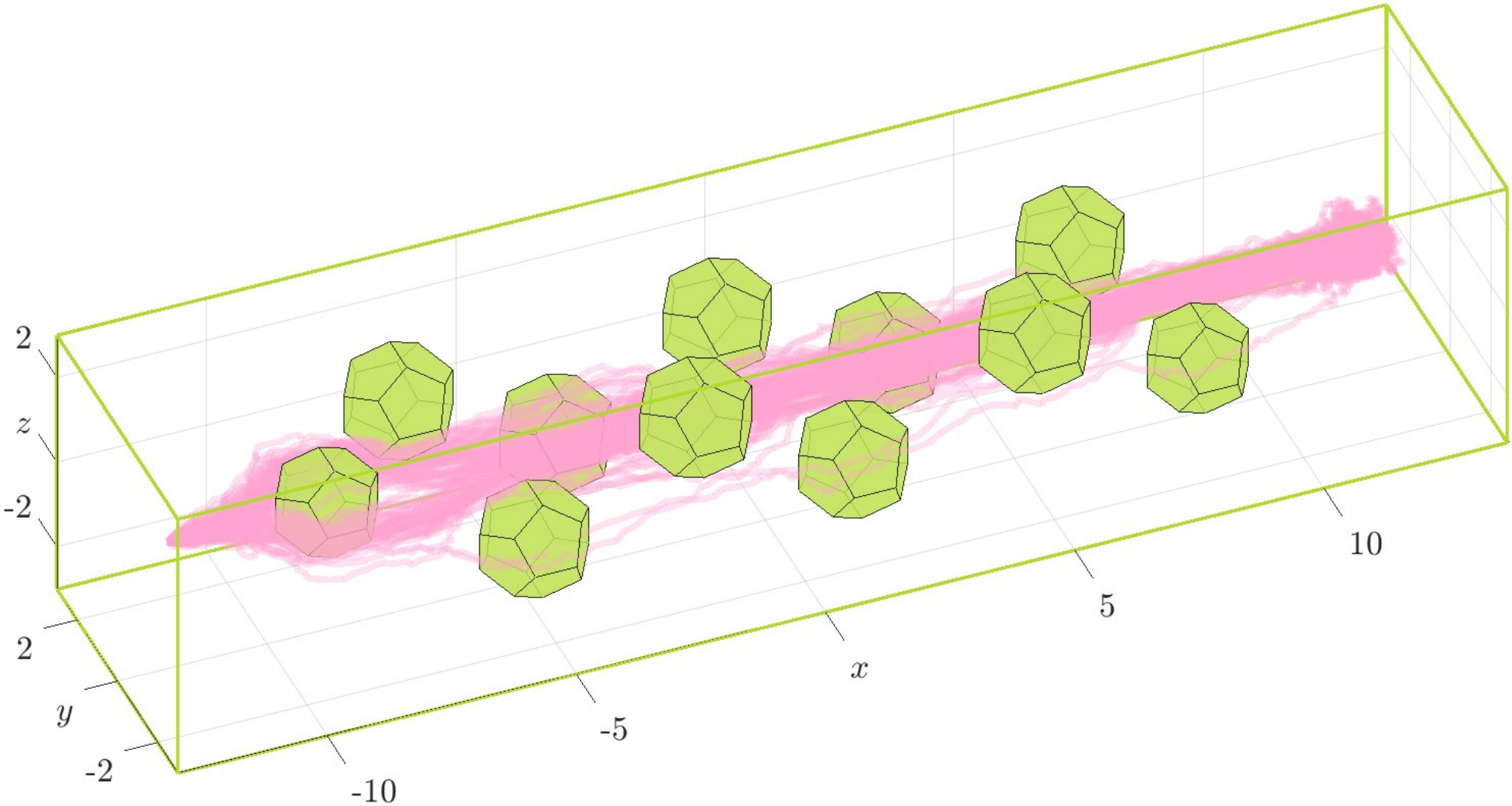}
\end{minipage}
}
\subfigure[SEA-PCBF]{
\begin{minipage}[b]{0.4\textwidth}
\includegraphics[width=1\textwidth]{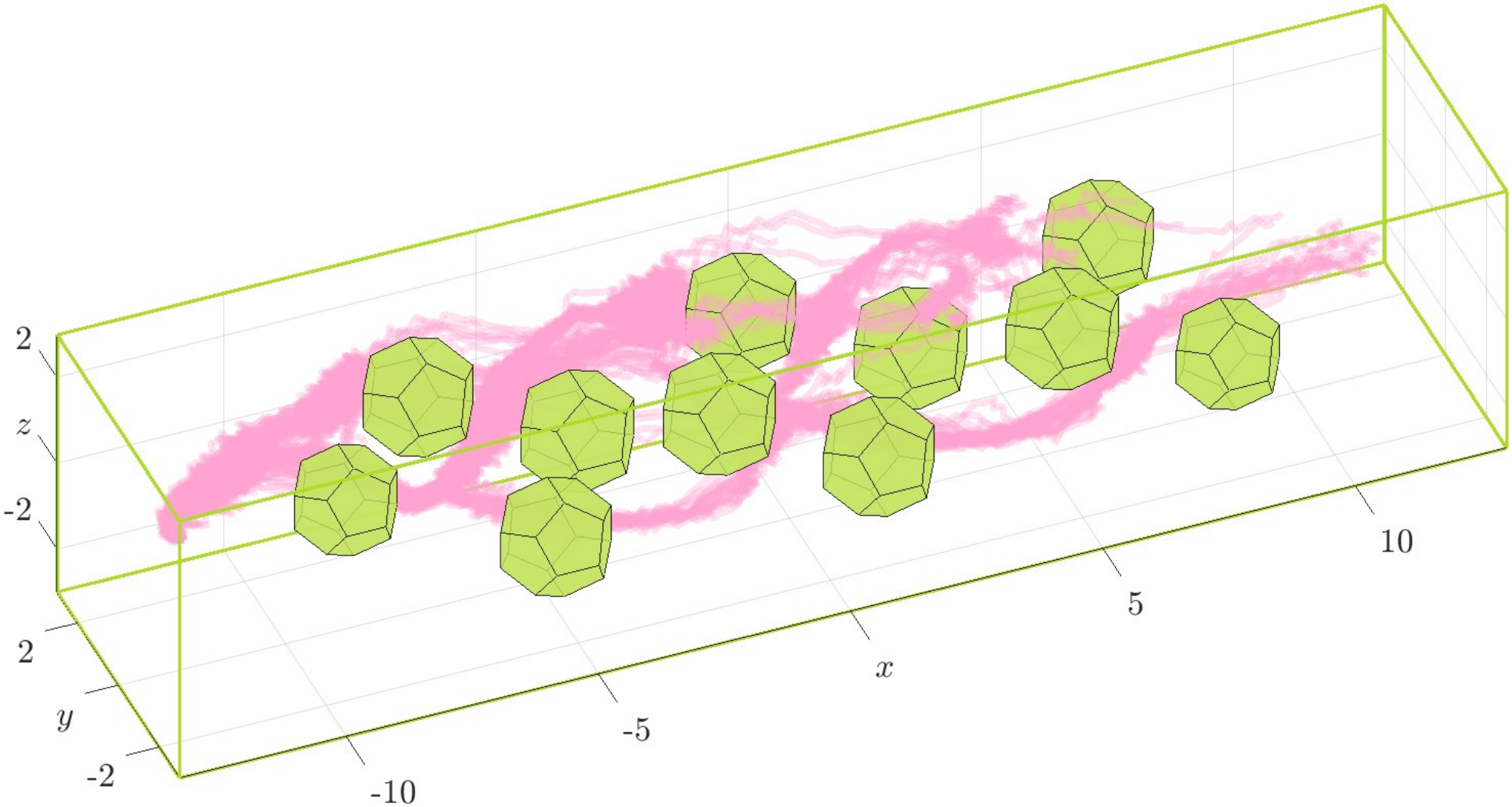}
\end{minipage}
}
\caption{Visualization of 100 trajectories (pink) among 500 MC trials with accurate obstacle (green) information generated by the SEA-SCBF-QP \eqref{eqn:SEASCBFQP}, the SEA-ED-QP \eqref{eqn:SEAEDQP}, and the SEA-PCBF-QP \eqref{eqn:SEAPCBFQP}, respecitvely.
}
\label{fig:MP}
\end{figure}
The simulation results (with 500 MC trials) are shown in Table~\ref{tab:comparison}, in which \textit{Safety Rate} denotes the percentage of collision-free trajectories, \textit{Goal Reach} denotes the percentage of trajectories that reach the goal point, and \textit{Env} indicates the accuracy of the closest obstacle facet used for active CBF selection, with \textit{Accurate} corresponding to ground-truth closest distances and \textit{Inaccurate} corresponding to estimated closest distances. 100 trajectories from the 500 MC trials with accurate environmental information are visualized in Fig.~\ref{fig:MP}.

As shown in Table~\ref{tab:comparison}, under accurate environmental information, the SEA-SCBF-QP \eqref{eqn:SEASCBFQP} achieves higher safety rate than the SEA-ED-QP \eqref{eqn:SEAEDQP}. This is because there is a term $\rho(\Sigma_k, \Sigma_{\varepsilon_k})$ in \eqref{eqn:SEASCBFQP}, leading to the  adaptivity to the level of uncertainty, i.e., with a worse estimator or higher noise, the controller tends to behave more conservatively, whereas there is no such a term in \eqref{eqn:SEAEDQP}. Another adaptivity of the the SEA-SCBF-QP \eqref{eqn:SEASCBFQP} is on how much the actions react to the level of uncertainty, as seen from the comparison between $\rho(\Sigma_k, \Sigma_{\varepsilon_k})$ in \eqref{eqn:rho(sig)} and $s(\Sigma_k, \Sigma_{\varepsilon_k})$ in \eqref{eqn:s(sig)}, i.e., when the estimated barrier function value is smaller, the actions resulting from \eqref{eqn:SEASCBFQP} become more conservative, whereas the ones from \eqref{eqn:SEAPCBFQP} keep the same level of reaction to the uncertainty no matter what the estimated barrier function value is. This can lead to overly conservative behavior. As a result, most candidate trajectories are obstructed by the obstacles, which is why the goal reaching rate achieved by the SEA-PCBF-QP \eqref{eqn:SEAPCBFQP} is significantly lower than the other two methods. 

A notable fact is that even though the SEA-PCBF-QP \eqref{eqn:SEAPCBFQP} typically results in more conservative actions, its safety rate is lower than the SEA-SCBF-QP \eqref{eqn:SEASCBFQP}. This is because when the estimated barrier function value is small, i.e., the estimated state is closer to an obstacle facet, the influence of the level of state estimation uncertainty has a greater impact in \eqref{eqn:SEASCBFQP} than in \eqref{eqn:SEAPCBFQP}. Especially when the level of the measurement noise is relatively higher than that of the process noise, the effect of the state estimation uncertainty is suppressed in \eqref{eqn:s(sig)}, resulting in relatively aggressive actions from \eqref{eqn:SEAPCBFQP} even when the state estimation uncertainty is high.  

In addition, we empirically evaluate the three methods under mapping uncertainty. In robotic systems incorporating simultaneous localization and mapping (SLAM), measurements used for state estimation are obtained from state-related sensors, such as IMU and visual odometry, while environment-related sensors, such as LiDAR and cameras, provide geometric information about surrounding obstacles. Although the mapping uncertainty is not the main focus of this paper, we evaluate the proposed methods under imperfect environment-related sensing, with the active CBF selected based on the estimated closest obstacle facet. As shown in Table~\ref{tab:comparison}, under inaccurate environmental information, each metric for each method is not higher (and typically slightly lower) than the one obtained with accurate environmental information. Moreover, under inaccurate environmental information, the relative performance trends across all methods remain consistent with those with accurate environmental information.

\section{SEA-SCBF on Lie Groups} \label{Section:LieGroup}
As discussed in Section~\ref{Section:Introduction}, for the vast majority of robotic systems, the states of interest evolve on manifolds rather than in Euclidean space. This distinction becomes particularly important in stochastic settings, where noise is not additive as it typically is in Euclidean space. In this section, we focus on the special Euclidean group which is a unimodular Lie group and arises in robotic systems commonly encountered in practice, with the state corresponding to the pose of a robot. In this case, the binary operation $\circ$ is matrix multiplication and the identity $\mathcal{I}$ is the identity matrix $I$ with the corresponding dimension. The intuition of left-invariance is a description of motion or error as seen in the body frame, remaining invariant to the choice of world frame, which can be mapped to the right-invariant form via the group adjoint action in Definition~\ref{def:adjoint}.

First, we review the concepts in Section~\ref{Section:LieGrouptool} specifically for $\mathrm{SE}(3)$ (see, e.g., \cite{murray2017mathematical,lynch2017modern,chirikjian2016harmonic}). The special Euclidean group $\mathrm{SE}(3)$ is the semi-direct product of the special orthogonal group $\mathrm{SO}(3)$ with the Abelian group $(\R^3,+)$, i.e., $\mathrm{SE}(3) = 
\mathrm{SO}(3) \ltimes (\mathbb{R}^3,+)$, where 
\begin{equation*}
\mathrm{SO}(3) 
=
\left\{
R \in \R^{3 \times 3} 
\mid R^\top R
=I, \, \mathrm{det}(R) = +1
\right\}.
\end{equation*}
A point $g \in \mathrm{SE}(3)$ takes the form of
\begin{equation*}
g = 
\begin{pmatrix} 
R & p \\ 
0^\top & 1 
\end{pmatrix},
\end{equation*}
in which $R \in \mathrm{SO}(3)$, $p \in \R^3$. 

The Lie algebra of $\mathrm{SO}(3)$ is
\begin{equation*}
\mathfrak{so}(3)
=
\left\{
\Omega \in \mathbb{R}^{3 \times 3} 
\mid 
\Omega = -\Omega^T
\right\},
\end{equation*}
and the Lie algebra of $\mathrm{SE}(3)$ is
\begin{equation*}
\mathfrak{se}(3)
=
\left\{
\begin{pmatrix} 
\Omega & v \\ 
0^\top & 0 
\end{pmatrix}
\,\middle|\,
\Omega \in \mathfrak{so}(3), \,v \in \R^3
\right\}.
\end{equation*}
Any element $\Omega \in \mathfrak{so}(3)$ can be represented by an angular velocity
$\omega = (\omega_1,\omega_2,\omega_3)^\top \in \R^3$ via
\begin{equation*}
\Omega = \omega^\wedge
=
\begin{pmatrix}
0 & -\omega_3 & \omega_2\\
\omega_3 & 0 & -\omega_1\\
-\omega_2 & \omega_1 & 0
\end{pmatrix} \in \mathfrak{so}(3),
\end{equation*}
and any element $\xi \in \mathfrak{se}(3)$ can be represented by a twist (i.e., an infinitesimal screw motion) $\xi^\vee = (\omega,v)^\top \in \R^6$. Accordingly, the left-invariant vector field induced by $\xi$ is
\begin{equation*}
\Gamma_\xi (g) 
= 
\begin{pmatrix} 
R\,\omega^\wedge & R\,v \\ 
0^\top & 0 
\end{pmatrix}.
\end{equation*}

Let $e_1,e_2,e_3$ be the standard basis of $\R^3$, then the standard basis of $\mathfrak{se}(3)$ is given by
\begin{equation*}
E_i
=
\begin{pmatrix} 
e_i^\wedge & 0 \\ 
0^\top & 0 
\end{pmatrix}, \quad
E_{i+3}
=
\begin{pmatrix} 
0 & e_i \\ 
0^\top & 0 
\end{pmatrix}, \quad \forall i = 1,2,3.
\end{equation*}
The group adjoint of $g \in \mathrm{SE}(3)$ acting on $\xi \in \mathfrak{se}(3)$ can be calculated as
\begin{equation*}
\left(g\,\xi\, g^{-1}\right)^\vee
= 
\left[\mathrm{Ad}_g \right] 
\,\xi^\vee,
\end{equation*}
in which
\begin{equation*}
\left[\mathrm{Ad}_g \right] 
=
\begin{pmatrix} 
R & 0 \\ 
p^\wedge R & R 
\end{pmatrix}.
\end{equation*}
Let $\theta = \|\omega\|$, then for any $\xi^\vee = (\omega,v)^\top$,
\begin{equation*}
\mathrm{Exp}(\xi) 
= 
\begin{pmatrix} 
\mathrm{Exp}(\omega^\wedge) & J(\omega)v \\ 
0^\top & 1 
\end{pmatrix},
\end{equation*}
in which
\begin{equation*}
\mathrm{Exp}(\omega^\wedge)
=
I + \frac{\sin(\theta)}{\theta}\omega^\wedge
+
\frac{1-\cos(\theta)}{\theta^2} (\omega^\wedge)^2, \quad \theta \neq 0,
\end{equation*}
\begin{equation*}
J(\omega)
=
I + 
\frac{1-\cos(\theta)}{\theta^2}\omega^\wedge
+
\frac{\theta - \sin(\theta)}{\theta^3} (\omega^\wedge)^2, \quad \theta \neq 0,
\end{equation*}
and $\mathrm{Exp}(\omega^\wedge) = J(\omega) = I$ when $\theta = 0$.

Consider the stochastic system on $\mathrm{SE}(3)$
\begin{equation}
\begin{cases}
g_{k+1} = g_k \,U_k \, \mathrm{Exp}(\varepsilon_k^\wedge), \\
z_k = H(g_k, \epsilon_k),
\end{cases}
\label{eqn:SEdynamics}
\end{equation}
in which $g_k, U_k \in \mathrm{SE}(3)$, $H: \mathrm{SE}(3) \times \R^l \to \mathcal{Z}$, $\varepsilon_k \sim \mathcal{N}(0, \Sigma_{\varepsilon_k})$, $\epsilon_k \sim \mathcal{N}(0, \Sigma_{\epsilon_k})$, 
$\Sigma_{\varepsilon_k} \in \mathbb{S}^6_+$, and $\Sigma_{\epsilon_k} \in \mathbb{S}^l_+$. The measurement manifold $\mathcal{Z}$ can be $\mathrm{SE}(3)$ itself or simply $\R^3$.

Then, we review state estimation with respect to \eqref{eqn:SEdynamics} as follows (see, e.g., \cite{chirikjian2016harmonic,wolfe2011bayesian,hartley2020contact,barfoot2024state}, for more detailed discussions on information fusion and state estimation on Lie groups). By using Lemma~\ref{lemma:BCH}, the prediction and update steps can be obtained as follows.

Prediction:
\begin{align*}
\mu_{k+1}^- 
&= \mu_k \, U_k, \\
\Sigma_{k+1}^- 
&= 
\big[\mathrm{Ad}_{U_k^{-1}}\big] \Sigma_k
\big[\mathrm{Ad}_{U_k^{-1}}\big]^\top
+ \Sigma_{\varepsilon_k}.
\end{align*}

Update:
\begin{align*}
\mu_{k+1} 
&= \mu_{k+1}^- \,\mathrm{Exp}((K_{k+1}\,r_{k+1})^\wedge),\\
\Sigma_{k+1}
&= (I - K_{k+1} H_{k+1})
\,\Sigma_{k+1}^-
(I - K_{k+1} H_{k+1})^\top \\
&\quad\,+
K_{k+1}\, \Sigma_{\epsilon_{k+1}} K_{k+1}^\top,
\end{align*}
in which
$K_{k+1} = \Sigma_{k+1}^- H_{k+1}^\top (H_{k+1} \Sigma_{k+1}^- H_{k+1}^\top + \Sigma_{\epsilon_{k+1}})^{-1}$. For pose measurement 
\begin{equation}
z_{k+1} = g_{k+1} \, \mathrm{Exp}(\epsilon_{k+1}^\wedge),\notag
\label{eqn:posemeasure}
\end{equation}
the innovation and the measurement matrix can be given by
\begin{equation*}
r_{k+1}
= \mathrm{Log}^\vee ((\mu_{k+1}^-)^{-1}z_{k+1}),\quad H_{k+1} = I,
\end{equation*}
respectively. For position-only measurement,
\begin{equation}
z_{k+1} =
p_{k+1} + \epsilon_{k+1}, \notag
\label{eqn:positionmeasure}
\end{equation}
the innovation and the measurement matrix can be given by
\begin{equation*}
r_{k+1} = z_{k+1} - p_{k+1}^-, \quad H_{k+1} = (0, R_{k+1}^-),
\end{equation*}
respectively, with
\begin{equation*}
\mu_{k+1}^-
=
\begin{pmatrix} 
R_{k+1}^- & p_{k+1}^- \\ 
0^\top & 1 
\end{pmatrix}.
\end{equation*}

Finally, we utilize the Taylor expansion per Lemma~\ref{lemma:Taylor} to approximate the SEA-SCBF constraint in \eqref{eqn:SE-CBF_Controller} on $\mathrm{SE}(3)$. Let $h\in C^2$ be an SEA-SCBF on $\mathrm{SE}(3)$, defining the safe set as
\begin{equation}
\mathcal{C} 
= 
\left\{ g\in \mathrm{SE}(3) 
\,\big|\,
h(g)\ge 0
\right\}.
\label{eqn:safesetSE3}
\end{equation}
Denote the left-invariant error by
$
\delta g = \mathrm{Log}^\vee (\mu^{-1} g).
$
Then, given $\mathcal{F}_k$, the SEA-SCBF $h$ at $g_k$ and $g_{k+1}$ can be approximated by
\begin{equation}
h(g_k) 
=
h(\mu_k)
+
\nabla_{\!\xi}h(\mu_k) \,\delta g_k
+
\frac{1}{2} {\delta g_k}^{\!\!\top}
\nabla^2_{\!\xi} h(\mu_k)\, \delta g_k,
\label{eqn:TaylorCBF1}
\end{equation}
and
\begin{align}
h(g_{k+1}) 
=
h(\mu_{k+1}^-)
&+
\nabla_{\!\xi}h(\mu_{k+1}^-) \,
{\delta g_{k+1}^-} \notag \\
&+
\frac{1}{2} (\delta g_{k+1}^-)^{\!\top}
\nabla^2_{\!\xi} h(\mu_{k+1}^-)\, {\delta g_{k+1}^-},
\label{eqn:TaylorCBF2}
\end{align}
neglecting the higher order terms, respectively, in which
\begin{equation*}
\nabla_{\!\xi} h
=
\left(
\mathcal{L}_{E_1}h, \dots, \mathcal{L}_{E_6}h
\right),
\end{equation*}
\begin{equation*}
\nabla^2_{\!\xi} h
=
\begin{pmatrix} 
\mathcal{L}_{E_1} \mathcal{L}_{E_1}h & \cdots & \mathcal{L}_{E_1} \mathcal{L}_{E_6}h\\ 
\vdots & \ddots & \vdots\\
\mathcal{L}_{E_6} \mathcal{L}_{E_1}h & \cdots & \mathcal{L}_{E_6} \mathcal{L}_{E_6}h
\end{pmatrix}.
\end{equation*}

Since $\EC{{\delta g_k}}{\mathcal{F}_k} = 0$ and $\EC{\delta g_k \delta g_k^\top}{\mathcal{F}_k} = \Sigma_k$ according to Definition~\ref{def:geometricmoment}, then taking expectations conditioned on $\mathcal{F}_k$ on \eqref{eqn:TaylorCBF1} yields
\begin{align}
\EC{h(g_k)}{\mathcal{F}_k} 
&= 
h(\mu_k) 
+ \frac{1}{2} \EC{\mathrm{tr}(\nabla^2_{\!\xi}h(\mu_k)\, \delta g_k \delta g_k^\top)}{\mathcal{F}_k}
\notag\\
&= 
h(\mu_k) 
+ 
\underbrace{
\frac{1}{2} \mathrm{tr}(\nabla^2_{\!\xi}h(\mu_k)\, \Sigma_k)
}_{\text{curvature correction term}}.
\label{eqn:curvature1}
\end{align} 
Since ${\delta g_{k+1}^-} = \mathrm{Log}^\vee ((\mu^-_{k+1})^{-1} g_{k+1}) = \big[\mathrm{Ad}_{U_k^{-1}}\big] \delta g_k + \varepsilon_k$ by using Lemma~\ref{lemma:BCH} and neglecting the higher order Lie bracket terms, then $\EC{{\delta g_{k+1}^-}}{\mathcal{F}_k} = 0$. Likewise, taking expectations conditioned on $\mathcal{F}_k$ on \eqref{eqn:TaylorCBF2} yields 
\begin{equation}
\EC{h(g_{k+1})}{\mathcal{F}_k} 
= 
h(\mu^-_{k+1}) 
+ 
\underbrace{
\frac{1}{2} \mathrm{tr}(\nabla^2_{\!\xi}h(\mu^-_{k+1})\, \Sigma^-_{k+1})
}_{\text{curvature correction term}}.
\label{eqn:curvature2}
\end{equation}

Note that the curvature correction terms in \eqref{eqn:curvature1} and \eqref{eqn:curvature2} remain for highly nonlinear SEA-SCBFs. The intuition for this is that by taking conditional expectations, the first-order terms are averaged out. In this scenario, if we do not incorporate the curvature correction terms, this is equivalent to treating purely linear functions, which cannot adequately approximate highly nonlinear SEA-SCBFs.

Moreover, taking variance conditioned on $\mathcal{F}_k$ on \eqref{eqn:TaylorCBF2} yields
\begin{align}
\VarC{h(g_{k+1})}{\mathcal{F}_k}
&=
\nabla_{\!\xi}h(\mu^-_{k+1})
\, \Sigma^-_{k+1} 
(\nabla_{\!\xi}h(\mu^-_{k+1}))^\top,
\label{eqn:VarSE3CBF}
\end{align}
neglecting the higher order terms.

In the following content, we apply the SEA-SCBF control synthesis framework \eqref{eqn:SE-CBF_Controller} together with \eqref{eqn:curvature1}, \eqref{eqn:curvature2}, and \eqref{eqn:VarSE3CBF} to $\mathrm{SE}(2)$ and $\mathrm{SE}(3)$, separately. Note that the corresponding expressions for $\mathrm{SE}(2)$ can be obtained as lower-dimensional special cases of $\mathrm{SE}(3)$.

\subsection{SEA-SCBF on $\mathrm{SE}(2)$} \label{section:SE2}
\begin{figure}[t]
\centering
\subfigure[]{
\begin{minipage}[b]{0.22\textwidth}
\includegraphics[width=1\textwidth]{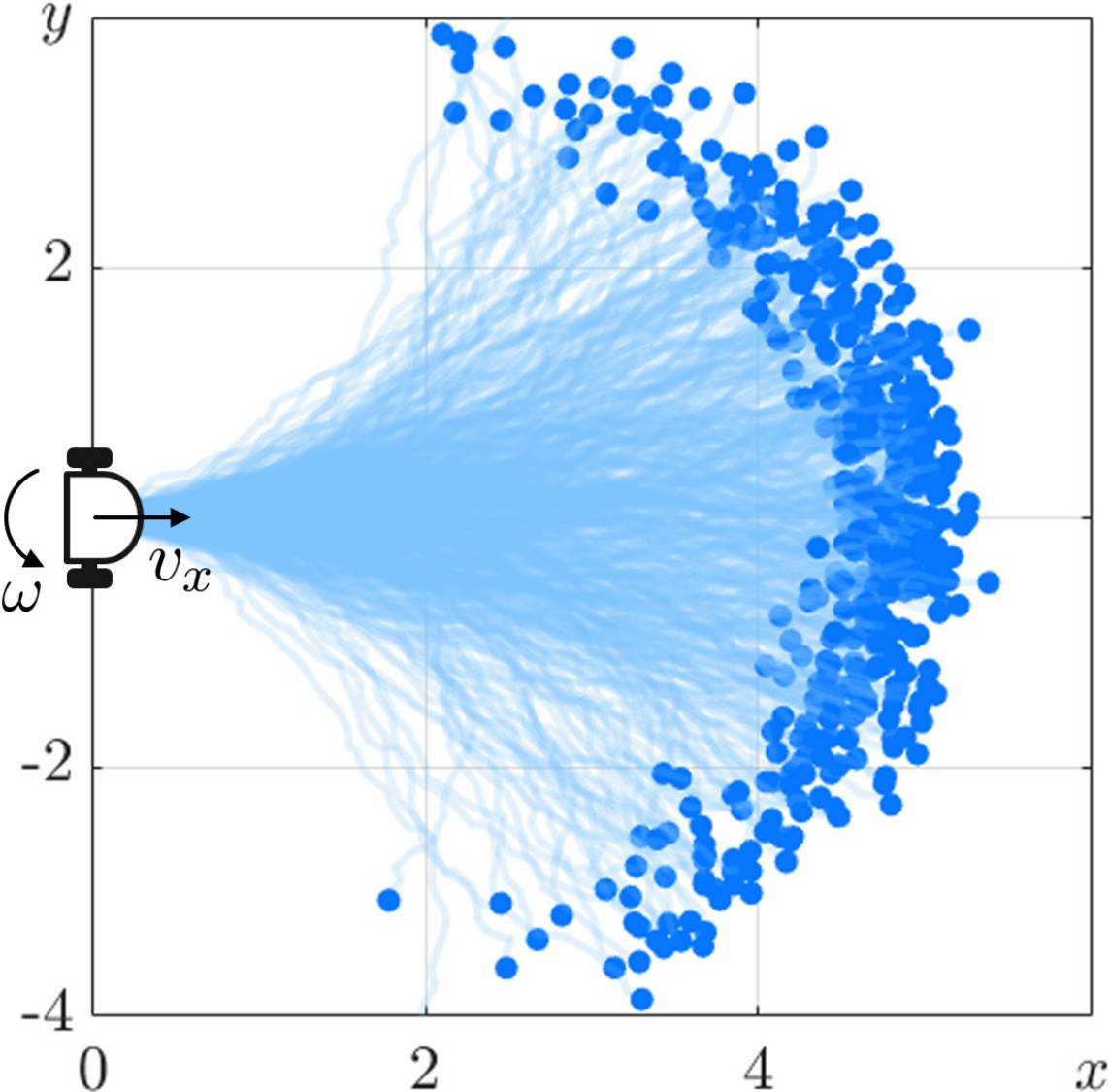}
\end{minipage}
\label{fig:SE2}
}
\subfigure[]{
\begin{minipage}[b]{0.216\textwidth}
\includegraphics[width=1\textwidth]{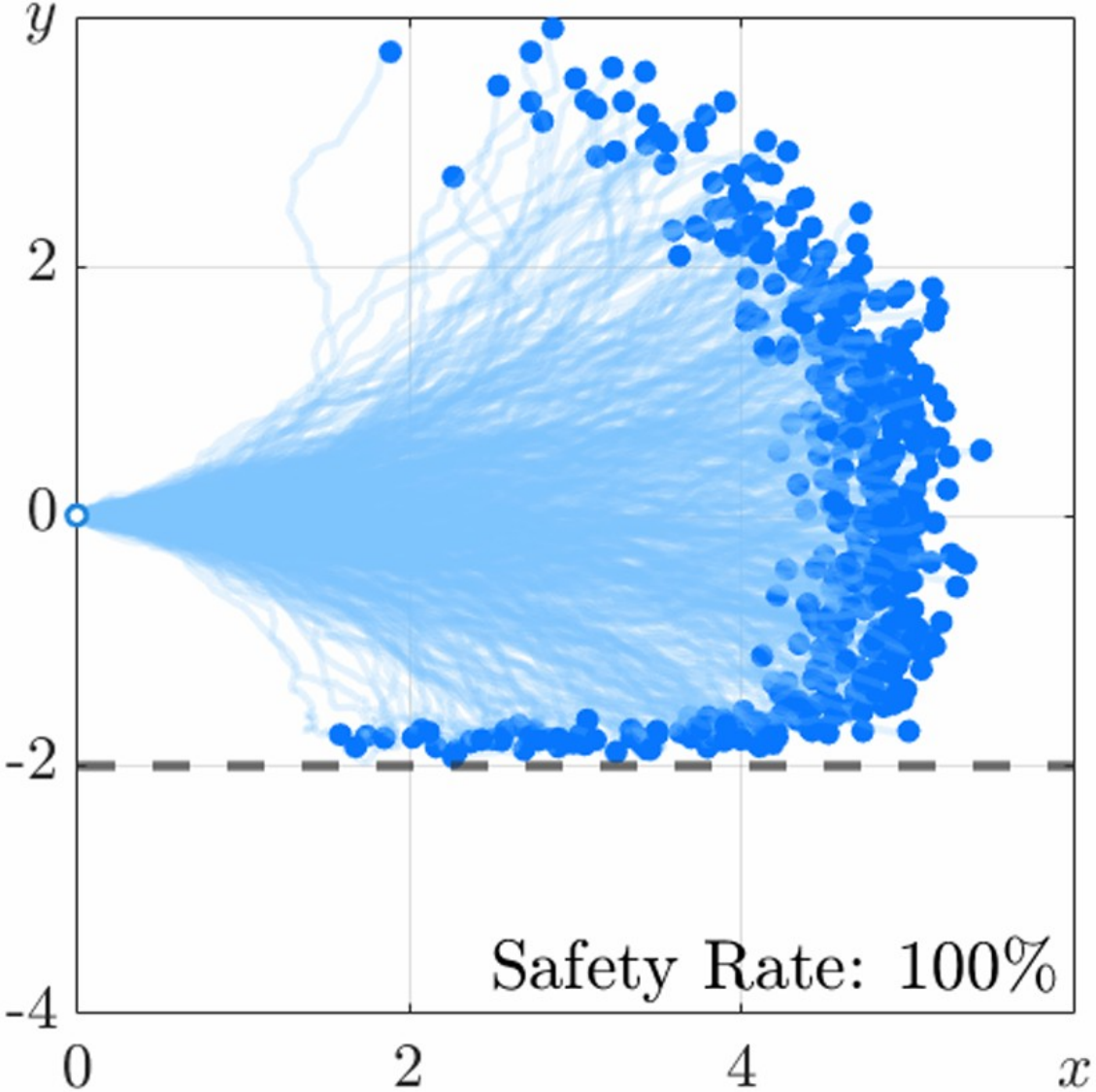}
\end{minipage}
\label{fig:SE2safe}
}
\caption{500 MC trials of a nonholonomic differential-drive wheeled robot with stochastic dynamics on $\mathrm{SE}(2)$.
The light blue lines represent the MC trajectories and the blue dots represent the final positions. (a) Without safety-filter; (b) With the safety-filter \eqref{eqn:opt_SE2}. The safety rate denotes the percentage of the trajectories remaining in the safe set among all trials.
}
\label{fig:SE2CBF}
\end{figure}
In this subsection, we apply the proposed framework \eqref{eqn:SE-CBF_Controller} to a nonholonomic differential-drive wheeled robot as illustrated in Fig.~\ref{fig:SE2}, whose state evolves on $\mathrm{SE}(2)$. Specifically, we consider \eqref{eqn:SEdynamics} with $U_k = \mathrm{Exp}(\xi_k \Delta t) \in \mathrm{SE}(2)$, $\varepsilon_k \sim \mathcal{N}(0,\diag(0.1^2, 0.03^2, 0.03^2))$, $\alpha = 0.9$, $\beta_k = 2 \Exp{-8 \widetilde{Y}_k}$, and $T = 50$. In terms of the body velocity $\xi_k = (\omega_k, v_k)^\top \in \mathfrak{se}(2)$, since its a nonholonomic robot, the lateral ($y$-axis of body-frame) component of the linear velocity is constrained to be zero, and we let the nominal control input be an open-loop controller driving the robot to move straight ahead, i.e., $\bar{\xi}_k = (0,1,0)^\top$. The SEA-SCBF is set to be 
\begin{equation}
h(g) = e_2^\top g \,e_3 +2,
\label{eqn:SE2CBF}
\end{equation}
and thus the curvature correction terms in \eqref{eqn:curvature1} and \eqref{eqn:curvature2} are neglected. Since the safety requirement does not concern the pose of the robot, we consider position-only measurement $z_k = p_k + \epsilon_k$ with $\epsilon_k \sim \mathcal{N}(0,\diag(0.05^2, 0.05^2))$.

As such, the stochastic safety-critical controller under state estimation uncertainty can be synthesized through the optimization problem
\begin{equation}
\begin{aligned}
& \argmin_{{\xi}_k \in \mathcal{U} \subset \R^3} 
& & \|\xi_k - \bar{\xi}_k\|^2 \\
& \quad\;\,
\textnormal{s.t.}
& &  
h(\mu_k U_k)
- 
\beta_k \sqrt{\Upsilon(\xi_k)} \geq 
\alpha h(\mu_k),
\end{aligned}
\label{eqn:opt_SE2}
\end{equation}
in which $\Upsilon(\xi_k) 
=
\nabla_{\!\xi}h(\mu_k U_k)
\, \Sigma^-_{k+1} 
(\nabla_{\!\xi}h(\mu_k U_k))^\top$. 

The simulation results (with 500 MC trials) are shown in Fig.~\ref{fig:SE2CBF}. Under the nominal open-loop controller that drives the robot along the $x$-axis of its body frame, the final positions from 500 MC trials exhibit a banana distribution \cite{thrun2000real,long2013banana}. With the safety filter \eqref{eqn:opt_SE2}, all MC trajectories stay in the safe set 
$\mathcal{C} = \left\{ g\in \mathrm{SE}(2) 
\,\big|\,
h(g)\ge 0
\right\}$ characterized by \eqref{eqn:SE2CBF}, which demonstrates the effectiveness of the proposed SEA-SCBF control synthesis framework \eqref{eqn:SE-CBF_Controller} on $\mathrm{SE}(2)$.

\subsection{SEA-SCBF on $\mathrm{SE}(3)$} \label{section:SE3}
\begin{figure}[t]
\centering
\includegraphics[scale=0.4]{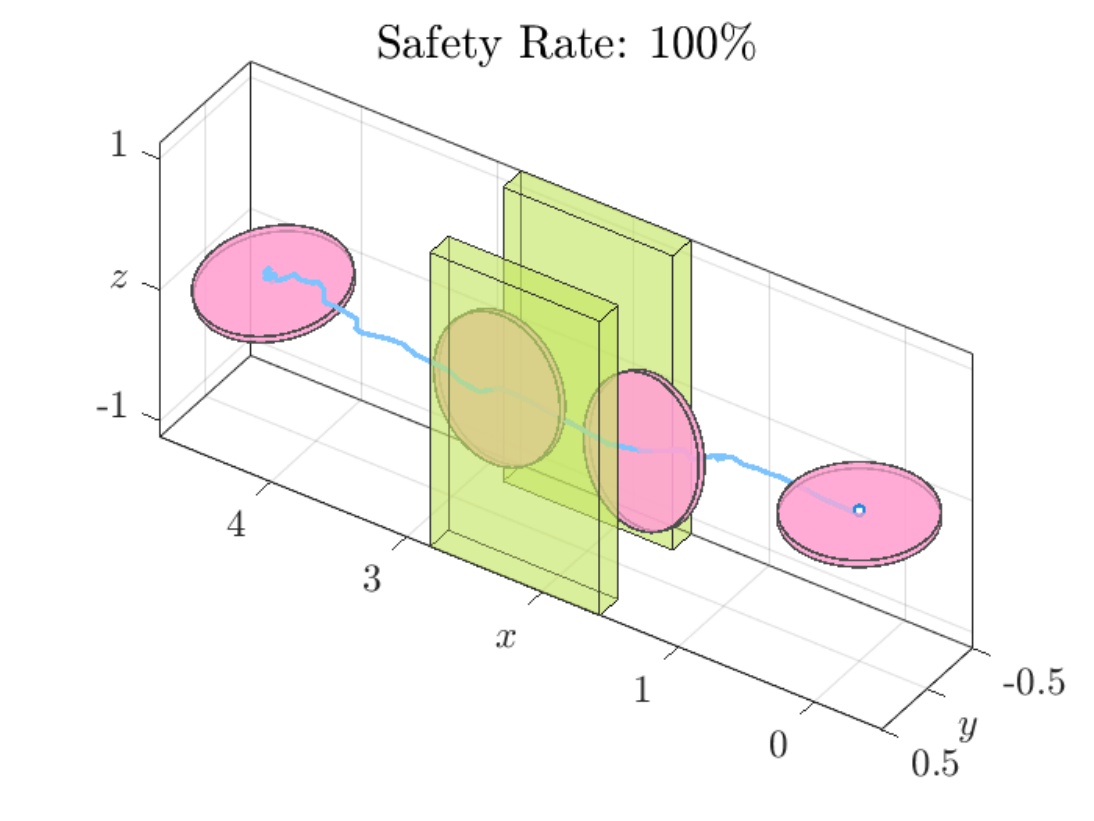}
\caption{Snapshots of the rigid body (pink) with stochastic dynamics on $\mathrm{SE}(3)$ navigating through the vertical slit (green) at four different time steps along one trajectory. The safety rate denotes the percentage of collision-free trajectories among 500 MC trials.}
\label{fig:SE3}
\end{figure}
As discussed in Section~\ref{Section:Introduction}, when the size of a robot is comparable to the scale of the environment, especially in crowded environments, the pose of the robot must be taken into account. In this scenario, safety can no longer be described solely by its position, such as a rigid body navigating through a slit considered in this subsection. Specifically, the rigid body is set to be a disk with radius $r_{\text{d}} = 0.5$, and the safe set \eqref{eqn:safesetSE3} is characterized by
\begin{equation}
h(g)
=
a_{\text{s}}(1-\chi_{\text{s}}(p)) + \chi_{\text{s}}(p) h_{\text{s}}(g),
\label{eqn:CBF_slit}
\end{equation}
where  $\chi_{\text{s}}(p) = \mathrm{exp}
\big(\!
-\!\frac{1}{2}\|p-c_{\text{s}}\|^2_{\Sigma_{\text{s}}^{-1}}\big)
$, $c_{\text{s}} = \frac{1}{2}(c_{\text{s},1} + c_{\text{s},2})+d_{\text{s}}$, $h_{\text{s}}(g) = -\frac{1}{\kappa_{\text{s}}} \mathrm{log}(\mathrm{exp}(-\kappa_{\text{s}} \varphi_1)+\mathrm{exp}(\kappa_{\text{s}} \varphi_2))$, $\varphi_i(g) = n_{\text{s}}^\top (p - c_{\text{s},i}) - r_{\text{d}}\sqrt{1-(n_{\text{s}}^\top R n_{\text{r}})^2}- d_{\text{s}}$, $\forall i = 1,2$, $\Sigma_{\text{s}}\succ 0$, $d_{\text{s}}>0$, $d_{\text{s}}>0$, $a_{\text{s}}>0$,  $\kappa_{\text{s}} >0$, $c_{\text{s},i}$ is the center of each wall, $n_{\text{s}}$ is the unit normal of the slit in the world frame, and $n_{\text{r}}$ is the unit normal of the rigid body in the body frame (see \cite{letti2025safety} for more details of the slit barrier function \eqref{eqn:CBF_slit} construction).

We consider \eqref{eqn:SEdynamics} with $U_k = \mathrm{Exp}(\xi_k \Delta t) \in \mathrm{SE}(3)$. Since the safety requirement concerns the pose of
the robot, we utilize the pose measurement $z_{k+1} = g_{k+1} \, \mathrm{Exp}(\epsilon_{k+1}^\wedge)$. The process noise and the measurement noise are $\varepsilon_k \sim \mathcal{N}(0,\diag(0.08^2 I_3, 0.03^2 I_3))$ and $\epsilon_k \sim \mathcal{N}(0,0.05^2 I_6)$, respectively, where $I_d$ denotes the identity matrix with dimension $d$. In addition, $\alpha = 0.9$, $\beta_k = 2 \Exp{-8 \widetilde{Y}_k}$, and $T = 70$. The nominal control input is a closed-loop go-to-pose controller driving the rigid body toward a goal pose with the same orientation as the initial pose, i.e., $\bar{\xi}_k = \mathrm{Log}^\vee (\mu_k^{-1} g_{\text{g}})$ with $p_{\text{g}} = (4.5,0,0)^\top$, $p_0 = (0,0,0)^\top$, and $R_{\text{g}} = R_0 = I$. 
\begin{figure}[t]
\centering
\includegraphics[scale=0.32]{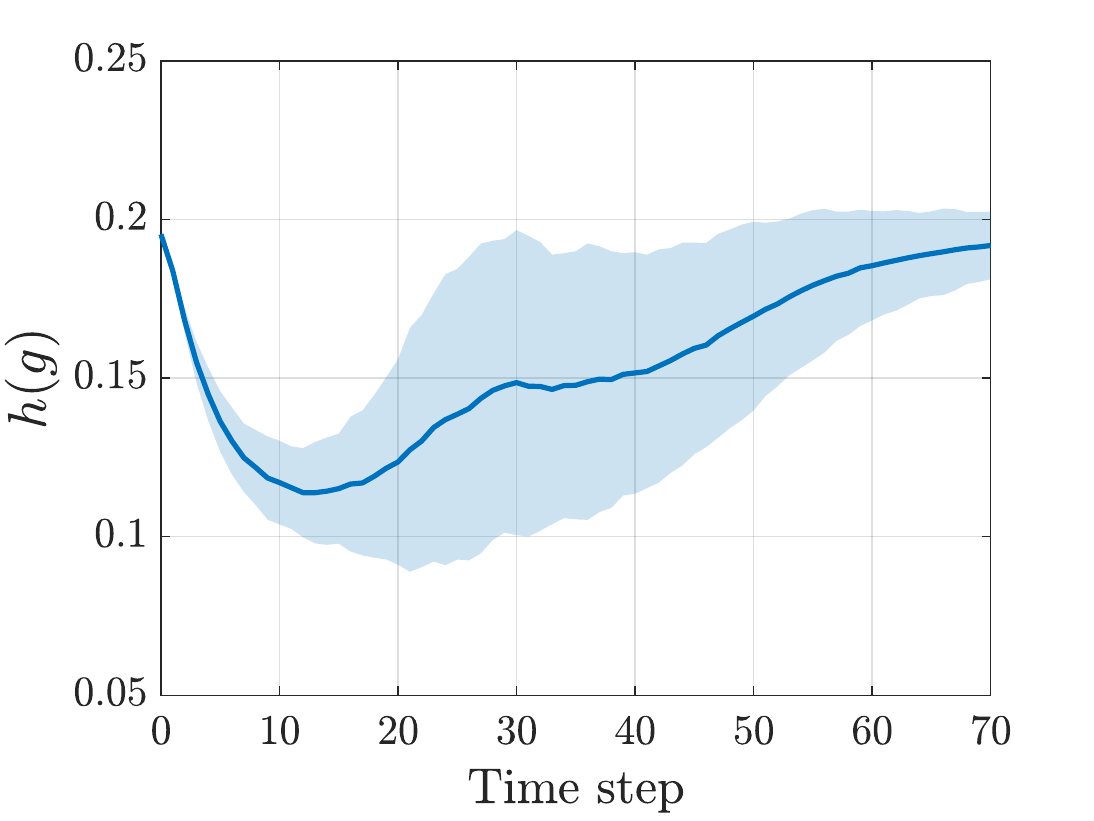}
\caption{SEA-SCBF value over time steps. The solid line indicates the mean over 500 MC trials, and the shaded region denotes $\pm$1 standard deviation.}
\label{fig:SE3_hg}
\end{figure}
As such, the stochastic safety-critical controller under state estimation unceratainty can be synthesized via the optimization problem
\begin{equation}
\begin{aligned}
& \argmin_{{\xi}_k \in \mathcal{U} \subset \R^6} 
& & \|\xi_k - \bar{\xi}_k\|^2 \\
& \quad\;\,
\textnormal{s.t.}
& &  
\Psi(\xi_k) 
- \beta_k \sqrt{\Upsilon(\xi_k)} \geq 
\alpha \Xi_k,
\end{aligned}
\label{eqn:opt_SE3}
\end{equation}
in which 
\begin{align*}
\Psi(\xi_k) 
&=
h(\mu_k U_k) 
+
\frac{1}{2} \mathrm{tr}(\nabla^2_{\!\xi}h(\mu_k U_k)\, \Sigma^-_{k+1}),\\
\Upsilon(\xi_k) 
&=
\nabla_{\!\xi}h(\mu_k U_k)
\, \Sigma^-_{k+1} 
(\nabla_{\!\xi}h(\mu_k U_k))^\top,\\
\Xi_k 
&= 
h(\mu_k) 
+
\frac{1}{2} \mathrm{tr}(\nabla^2_{\!\xi}h(\mu_k)\, \Sigma_k).
\end{align*}
Note that unlike the application in Section~\ref{Section:motion planning} where there are $136$ barrier functions composed through Boolean operators, here we only have a single one \eqref{eqn:CBF_slit}, although the price to pay is that it is highly nonlinear, which is why the curvature correction terms in \eqref{eqn:curvature1} and \eqref{eqn:curvature2} remain in \eqref{eqn:opt_SE3}.

The simulation results (with 500 MC trials) are shown in Fig.~\ref{fig:SE3} and Fig.~\ref{fig:SE3_hg}. The nominal closed-loop controller attempts to keep the rigid body translating while maintaining the initial orientation. When a slit appears ahead, the safety filter \eqref{eqn:opt_SE3} enables the rigid body to adaptively reorient itself while passing through the slit in a safe (i.e., collision-free) manner. All MC trajectories stay in the safe set $\mathcal{C} = \left\{ g\in \mathrm{SE}(3) 
\,\big|\,
h(g)\ge 0
\right\}$ characterized by \eqref{eqn:CBF_slit}, i.e., the barrier function values of all MC trajectories stay nonnegative for all time, which demonstrates the effectiveness of the SEA-SCBF control synthesis framework \eqref{eqn:SE-CBF_Controller} on $\mathrm{SE}(3)$.

\section{Conclusion} \label{Section:Conclusion}
In this work, we propose a safety-critical control framework for stochastic systems when safety requirements are defined on the true state but only noisy state information is available. The proposed framework incorporates uncertainty arising jointly from dynamics and measurement noise, enables probabilistic reasoning about finite-time safety, and supports online control synthesis that adapts to the level of uncertainty. In linear settings, closed-form expressions are derived, resulting in a quadratic program for control synthesis. This further suggests that stochasticity, rather than being merely a source of uncertainty, can be actively exploited in motion planning, offering a new perspective on trajectory generation. The effectiveness of the proposed framework is also demonstrated on systems evolving on $\mathrm{SE}(2)$ and $\mathrm{SE}(3)$, separately, highlighting its applicability beyond Euclidean state spaces.

\bibliographystyle{IEEEtran}
\bibliography{references}{}

@inproceedings{cosner2023robust,
  author={Cosner, Ryan K and Culbertson, Preston and Taylor, Andrew J and Ames, Aaron D},
  title={Robust Safety under Stochastic Uncertainty with Discrete-Time Control Barrier Functions},
  booktitle={Robotics: Science and Systems},
  year={2023}
}

@article{cosner2024bounding,
  title={Bounding stochastic safety: Leveraging {F}reedman’s inequality with discrete-time control barrier functions},
  author={Cosner, Ryan K and Culbertson, Preston and Ames, Aaron D},
  journal={IEEE Control Systems Letters},
  volume={8},
  pages={1937--1942},
  year={2024},
  publisher={IEEE}
}

@article{mestres2025probabilistic,
  title={Probabilistic Control Barrier Functions: Safety in Probability for Discrete-Time Stochastic Systems},
  author={Mestres, Pol and Werner, Blake and Cosner, Ryan K and Ames, Aaron D},
  journal={arXiv preprint arXiv:2510.01501},
  year={2025}
}

@book{vershynin2018high,
  title={High-dimensional probability: An introduction with applications in data science},
  author={Vershynin, Roman},
  volume={47},
  year={2018},
  publisher={Cambridge university press}
}

@article{steinhardt2012finite,
  title={Finite-time regional verification of stochastic non-linear systems},
  author={Steinhardt, Jacob and Tedrake, Russ},
  journal={The International Journal of Robotics Research},
  volume={31},
  number={7},
  pages={901--923},
  year={2012},
  publisher={SAGE Publications Sage UK: London, England}
}

@book{grimmett2020probability,
  title={Probability and random processes},
  author={Grimmett, Geoffrey and Stirzaker, David},
  year={2020},
  publisher={Oxford university press}
}

@book{chirikjian2016harmonic,
  title={Harmonic analysis for engineers and applied scientists: updated and expanded edition},
  author={Chirikjian, Gregory S and Kyatkin, Alexander B},
  year={2016},
  publisher={Courier Dover Publications}
}

@article{lavalle1998rapidly,
  title={Rapidly-exploring random trees: A new tool for path planning},
  author={LaValle, Steven},
  journal={Research Report 9811},
  year={1998},
  publisher={Department of Computer Science, Iowa State University}
}

@article{kavraki2002probabilistic,
  title={Probabilistic roadmaps for path planning in high-dimensional configuration spaces},
  author={Kavraki, Lydia E and Svestka, Petr and Latombe, J-C and Overmars, Mark H},
  journal={IEEE transactions on Robotics and Automation},
  volume={12},
  number={4},
  pages={566--580},
  year={2002},
  publisher={IEEE}
}

@inproceedings{janner2022planning,
  title={Planning with Diffusion for Flexible Behavior Synthesis},
  author={Janner, Michael and Du, Yilun and Tenenbaum, Joshua B and Levine, Sergey},
  booktitle={International Conference on Machine Learning},
  pages={9902--9915},
  year={2022},
  organization={PMLR}
}

@article{glotfelter2020nonsmooth,
  title={A nonsmooth approach to controller synthesis for boolean specifications},
  author={Glotfelter, Paul and Cort{\'e}s, Jorge and Egerstedt, Magnus},
  journal={IEEE Transactions on Automatic Control},
  volume={66},
  number={11},
  pages={5160--5174},
  year={2020},
  publisher={IEEE}
}

@book{bullo2019geometric,
  title={Geometric control of mechanical systems: modeling, analysis, and design for simple mechanical control systems},
  author={Bullo, Francesco and Lewis, Andrew D},
  volume={49},
  year={2019},
  publisher={Springer}
}

@book{barfoot2024state,
  title={State estimation for robotics},
  author={Barfoot, Timothy D},
  year={2024},
  publisher={Cambridge University Press}
}

@book{chirikjian2011stochastic,
  title={Stochastic models, information theory, and {L}ie groups, volume 2: Analytic methods and modern applications},
  author={Chirikjian, Gregory S},
  volume={2},
  year={2011},
  publisher={Springer Science \& Business Media}
}

@article{kalman1960new,
  title={A new approach to linear filtering and prediction problems},
  author={Kalman, Rudolph Emil},
  year={1960}
}

@book{lynch2017modern,
  title={Modern robotics},
  author={Lynch, Kevin M and Park, Frank C},
  year={2017},
  publisher={Cambridge University Press}
}

@book{murray2017mathematical,
  title={A mathematical introduction to robotic manipulation},
  author={Murray, Richard M and Li, Zexiang and Sastry, S Shankar},
  year={2017},
  publisher={CRC press}
}

@article{hartley2020contact,
  title={Contact-aided invariant extended Kalman filtering for robot state estimation},
  author={Hartley, Ross and Ghaffari, Maani and Eustice, Ryan M and Grizzle, Jessy W},
  journal={The International Journal of Robotics Research},
  volume={39},
  number={4},
  pages={402--430},
  year={2020},
  publisher={SAGE Publications Sage UK: London, England}
}

@article{long2013banana,
  title={The banana distribution is {G}aussian: A localization study with exponential coordinates},
  author={Long, Andrew W and Wolfe, Kevin C and Mashner, Michael J and Chirikjian, Gregory S},
  journal={Robotics: Science and Systems VIII},
  volume={265},
  number={1},
  year={2013},
  publisher={MIT Press Cambridge, MA, USA}
}

@article{taha2020vibrational,
  title={Vibrational control: A hidden stabilization mechanism in insect flight},
  author={Taha, Haithem E and Kiani, Mohammadali and Hedrick, Tyson L and Greeter, Jeremy SM},
  journal={Science Robotics},
  volume={5},
  number={46},
  pages={eabb1502},
  year={2020},
  publisher={American Association for the Advancement of Science}
}

@book{lavalle2006planning,
  title={Planning algorithms},
  author={LaValle, Steven M},
  year={2006},
  publisher={Cambridge university press}
}

@article{letti2025safety,
  title={Safety-Critical Control on {L}ie Groups Using Energy-Augmented Zeroing Control Barrier Functions},
  author={Letti, Alessandro and Zanella, Riccardo and Macchelli, Alessandro and Califano, Federico},
  journal={arXiv preprint arXiv:2512.07395},
  year={2025}
}

@article{bullo1999tracking,
  title={Tracking for fully actuated mechanical systems: a geometric framework},
  author={Bullo, Francesco and Murray, Richard M},
  journal={Automatica},
  volume={35},
  number={1},
  pages={17--34},
  year={1999},
  publisher={Elsevier}
}

@article{brockett1983asymptotic,
  title={Asymptotic stability and feedback stabilization},
  author={Brockett, Roger W},
  journal={Differential geometric control theory},
  volume={27},
  number={1},
  pages={181--191},
  year={1983},
  publisher={Boston}
}

@article{sussmann1973orbits,
  title={Orbits of families of vector fields and integrability of distributions},
  author={Sussmann, H{\'e}ctor J},
  journal={Transactions of the American Mathematical Society},
  volume={180},
  pages={171--188},
  year={1973}
}

@article{pek2020fail,
  title={Fail-safe motion planning for online verification of autonomous vehicles using convex optimization},
  author={Pek, Christian and Althoff, Matthias},
  journal={IEEE Transactions on Robotics},
  volume={37},
  number={3},
  pages={798--814},
  year={2020},
  publisher={IEEE}
}

@article{alemzadeh2016adverse,
  title={Adverse events in robotic surgery: a retrospective study of 14 years of FDA data},
  author={Alemzadeh, Homa and Raman, Jaishankar and Leveson, Nancy and Kalbarczyk, Zbigniew and Iyer, Ravishankar K},
  journal={PloS one},
  volume={11},
  number={4},
  pages={e0151470},
  year={2016},
  publisher={Public Library of Science San Francisco, CA USA}
}

@article{ames2016control,
  title={Control barrier function based quadratic programs for safety critical systems},
  author={Ames, Aaron D and Xu, Xiangru and Grizzle, Jessy W and Tabuada, Paulo},
  journal={IEEE Transactions on Automatic Control},
  volume={62},
  number={8},
  pages={3861--3876},
  year={2016},
  publisher={IEEE}
}

@article{so2023almost,
  title={Almost-sure safety guarantees of stochastic zero-control barrier functions do not hold},
  author={So, Oswin and Clark, Andrew and Fan, Chuchu},
  journal={arXiv preprint arXiv:2312.02430},
  year={2023}
}

@article{santoyo2021barrier,
  title={A barrier function approach to finite-time stochastic system verification and control},
  author={Santoyo, Cesar and Dutreix, Maxence and Coogan, Samuel},
  journal={Automatica},
  volume={125},
  pages={109439},
  year={2021},
  publisher={Elsevier}
}

@inproceedings{santoyo2019verification,
  title={Verification and control for finite-time safety of stochastic systems via barrier functions},
  author={Santoyo, Cesar and Dutreix, Maxence and Coogan, Samuel},
  booktitle={2019 IEEE conference on control technology and applications (CCTA)},
  pages={712--717},
  year={2019},
  organization={IEEE}
}

@article{liu2024safety,
  title={Safety verification of stochastic systems: A set-erosion approach},
  author={Liu, Zishun and Jafarpour, Saber and Chen, Yongxin},
  journal={IEEE Control Systems Letters},
  year={2024},
  publisher={IEEE}
}

@article{liu2025safety,
  title={Safety verification of nonlinear stochastic systems via probabilistic tube},
  author={Liu, Zishun and Jafarpour, Saber and Chen, Yongxin},
  journal={arXiv preprint arXiv:2503.03328},
  year={2025}
}

@inproceedings{wu2015safety,
  title={Safety-critical and constrained geometric control synthesis using control lyapunov and control barrier functions for systems evolving on manifolds},
  author={Wu, Guofan and Sreenath, Koushil},
  booktitle={2015 American Control Conference (ACC)},
  pages={2038--2044},
  year={2015},
  organization={IEEE}
}

@article{marques2025lies,
  title={Lies We Can Trust: Quantifying Action Uncertainty with Inaccurate Stochastic Dynamics through Conformalized Nonholonomic {L}ie Groups},
  author={Marques, Lu{\'\i}s and Ghaffari, Maani and Berenson, Dmitry},
  journal={arXiv preprint arXiv:2512.10294},
  year={2025}
}

@inproceedings{thrun2000real,
  title={A real-time algorithm for mobile robot mapping with applications to multi-robot and 3{D} mapping},
  author={Thrun, Sebastian and Burgard, Wolfram and Fox, Dieter},
  booktitle={Proceedings 2000 ICRA. Millennium Conference. IEEE International Conference on Robotics and Automation. Symposia Proceedings (Cat. No. 00CH37065)},
  volume={1},
  pages={321--328},
  year={2000},
  organization={IEEE}
}

@article{wu2016safety,
  title={Safety-critical geometric control for systems on manifolds subject to time-varying constraints},
  author={Wu, Guofan and Sreenath, Koushil},
  journal={IEEE Transactions on Automatic Control (TAC), in review},
  year={2016}
}

@article{de2025bundles,
  title={From Bundles to Backstepping: Geometric Control Barrier Functions for Safety-Critical Control on Manifolds},
  author={de Sa, Massimiliano and Ong, Pio and Ames, Aaron D},
  journal={arXiv preprint arXiv:2510.20202},
  year={2025}
}

@inproceedings{prajna2004stochastic,
  title={Stochastic safety verification using barrier certificates},
  author={Prajna, Stephen and Jadbabaie, Ali and Pappas, George J},
  booktitle={2004 43rd IEEE conference on decision and control (CDC)(IEEE Cat. No. 04CH37601)},
  volume={1},
  pages={929--934},
  year={2004},
  organization={IEEE}
}

@article{freedman1975tail,
  title={On tail probabilities for martingales},
  author={Freedman, David A},
  journal={the Annals of Probability},
  pages={100--118},
  year={1975},
  publisher={JSTOR}
}

@incollection{brockett1997notes,
  title={Notes on stochastic processes on manifolds},
  author={Brockett, Roger},
  booktitle={Systems and Control in the Twenty-first Century},
  pages={75--100},
  year={1997},
  publisher={Springer}
}

@article{kushner2003finite,
  title={Finite time stochastic stability and the analysis of tracking systems},
  author={Kushner, H},
  journal={IEEE Transactions on Automatic Control},
  volume={11},
  number={2},
  pages={219--227},
  year={2003},
  publisher={IEEE}
}

@article{wolfe2011bayesian,
  title={Bayesian fusion on {L}ie groups},
  author={Wolfe, Kevin C and Mashner, Michael and Chirikjian, Gregory S},
  journal={Journal of Algebraic Statistics},
  volume={2},
  number={1},
  year={2011}
}
\end{document}